\newcommand{\mathbbm}[1]{\text{\usefont{U}{bbm}{m}{n}#1}}
\theoremstyle{plain}
\newtheorem{theorem}{Theorem}[section]
\newtheorem{proposition}[theorem]{Proposition}
\newtheorem{corollary}[theorem]{Corollary}
\newtheorem{assumption}[theorem]{Assumption}
\theoremstyle{definition}
\newtheorem{definition}[theorem]{Definition}
\newtheorem{remark}[theorem]{Remark}
\newtheorem{example}{Example}
\newcommand{\cA}{\mathcal{A}}
\newcommand{\cB}{\mathcal{B}}
\newcommand{\cC}{\mathcal{C}}
\newcommand{\cO}{\mathcal{O}}
\newcommand{\cF}{\mathcal{F}}
\newcommand{\cG}{\mathcal{G}}
\newcommand{\cD}{\mathcal{D}}
\newcommand{\cU}{\mathcal{U}}
\newcommand{\cX}{\mathcal{X}}
\newcommand{\cY}{\mathcal{Y}}
\newcommand{\cN}{\mathcal{N}}
\newcommand{\cV}{\mathcal{V}}
\newcommand{\cL}{\mathcal{L}}
\newcommand{\bE}{\mathbb{E}}
\newcommand{\indep}{\perp \!\!\! \perp}
\newcommand{\sign}{\text{sign}}
\newcommand{\An}{\text{An}}
\newcommand{\Pa}{\text{Pa}}
\newcommand{\De}{\text{De}}
\newcommand{\Var}{\text{Var}}
\newcommand{\Cov}{\text{Cov}}
\newcommand{\MAP}{\scriptsize \text{MAP}}
\icmltitlerunning{Causal Bias Quantification for Continuous Treatments}
\begin{document}

\twocolumn[
\icmltitle{Causal Bias Quantification for Continuous Treatments}

% It is OKAY to include author information, even for blind
% submissions: the style file will automatically remove it for you
% unless you've provided the [accepted] option to the icml2022
% package.

% List of affiliations: The first argument should be a (short)
% identifier you will use later to specify author affiliations
% Academic affiliations should list Department, University, City, Region, Country
% Industry affiliations should list Company, City, Region, Country

% You can specify symbols, otherwise they are numbered in order.
% Ideally, you should not use this facility. Affiliations will be numbered
% in order of appearance and this is the preferred way.
\icmlsetsymbol{equal}{*}

\begin{icmlauthorlist}
\icmlauthor{Gianluca Detommaso}{equal,comp}
\icmlauthor{Michael Brückner}{equal,comp}
\icmlauthor{Philip Schulz}{comp}
\icmlauthor{Victor Chernozhukov}{comp,yyy}
\end{icmlauthorlist}

\icmlaffiliation{comp}{Amazon, Berlin, Germany}
\icmlaffiliation{yyy}{Department of Economics, Massachusetts Institute of Technology, USA}

\icmlcorrespondingauthor{Gianluca Detommaso}{detomma@amazon.de}
\icmlcorrespondingauthor{Michael Brückner}{brueckm@amazon.de}

% You may provide any keywords that you
% find helpful for describing your paper; these are used to populate
% the "keywords" metadata in the PDF but will not be shown in the document
\icmlkeywords{Causal Inference}

\vskip 0.3in
]

% this must go after the closing bracket ] following \twocolumn[ ...

% This command actually creates the footnote in the first column
% listing the affiliations and the copyright notice.
% The command takes one argument, which is text to display at the start of the footnote.
% The \icmlEqualContribution command is standard text for equal contribution.
% Remove it (just {}) if you do not need this facility.

%\printAffiliationsAndNotice{}  % leave blank if no need to mention equal contribution
\printAffiliationsAndNotice{\icmlEqualContribution} % otherwise use the standard text.

\begin{abstract}
We extend the definition of the marginal causal effect to the continuous treatment setting and develop a novel characterization of causal bias in the framework of structural causal models. We prove that our derived bias expression is zero if, and only if, the causal effect is identifiable via covariate adjustment. We show that under some restrictions on the structural equations, the causal bias can be estimated efficiently and allows for causal regularization of predictive probabilistic models. We demonstrate the effectiveness of our method for causal bias quantification in various settings where (not) controlling for certain covariates would introduce causal bias.
\end{abstract}

\section{Introduction}

In probabilistic modeling we are often concerned with estimating the posterior predictive distribution $p(Y|X=x,\cD)$ from observed data $\cD$, generated by some unknown data generating mechanism~\cite{Bishop_2006, koller2009probabilistic}. The posterior predictive captures our beliefs about the association of some input variable $X$ and the corresponding outcome variable $Y$ when both were generated by the same mechanism as the training data. However, in many practical applications, new inputs might not be generated by the exact same mechanism; we may face selection bias; or we aim for a fair model explaining outcomes primarily based on cause. In these cases, we may consider a causal model and estimate the counterfactual distribution (posterior interventional distribution), $p(Y|X\leftarrow x, \cD)$~\cite{lauritzen2001causal, pearl2009, pearl2009causality}, which states the posterior probability of an outcome if we intervene in the data generating mechanism and set treatment variable (input)  $X$ to $x$, also written $do(X=x)$.

To illustrate the difference, consider watching people (not) wearing a raincoat and we are interested in predicting street conditions. In a purely predictive setting, we would learn from past data that observing raincoats is correlated with wet streets and  predict street conditions accordingly. However, such a model may fail when applied to a different population of people. For instance, in windy regions, people may wear raincoats even on sunny days; in warm regions, people may prefer umbrellas. In contrast, the posterior interventional distribution would tell us that putting on a coat does not affect street conditions and our prediction would purely rely on the marginal probability of observing wet streets.

In many applications, the situation is more nuanced and we are facing the problem whether to include certain covariates in the model to improve predictive performance or whether to exclude them to not incur into bias. The problem is well-understood if we aim to avoid any causal bias; however, what if we are willing to accept some bias in order to improve predictive performance?

Starting from a given causal probabilistic model defined in terms of structural equations, we are interested in quantifying the causal bias of a predictive model which, together with its posterior probability, allows us to balance predictive performance with causal alignment. We focus on the setting of continuous treatment variables, which will enable us to state an analytic expression of the causal bias.

\subsection{Related Work}

Large parts of literature on causal inference focus on {\it causal identifiability}, that is, identifying conditions under which the causal relationship can be unbiasedly estimated from the observed data~\cite{Spirtes2000, Shpitser2006, reason:Pearl09a}, and {\it causal estimation} of models when these conditions are satisfied. 

If we are not willing to make any additional (parametric) modeling assumptions, causal identifiability requires treatments from a finite domain~\cite{angrist2008mostly, hernan2020causal}. In particular, for binary treatments, we might only be interested in the causal effect expressed in terms of the average treatment effect of the treated (ATET) or the average treatment effect (ATE)~\cite{imbens2015causal}, which are the expected difference between the (potential) outcomes under the two alternative treatments for the treated and the whole population, respectively.

In situations where different posterior interventional distributions are compatible with the observed data and the assumed causal relations, the causal effect is {\it non-identifiable} without making additional modeling assumptions, for instance, on the functional relationship of treatments and outcomes; this is in particular true for continuous treatments~\cite{rosenbaum2010design, NIPS2016_aff16212, hines2021parameterising}. Modeling the treatment assignment mechanism allows for {\it partial identification} of the causal effect through inverse propensity weighting~\cite{imbens2000role, hirano2004propensity, imai2004causal, wu2018matching, austin2019assessing}, modeling the outcome mechanism allows for regression methods~\cite{chipman2010bart, hill2011bayesian}, and doubly robust estimation can be used to combine both~\cite{kennedy2017nonparametric, chernozhukov2018double}. More generally, structural causal models~\cite{reason:Pearl09a} allow us to describe the full data generating mechanism.

Similarly, estimating the average treatment effect for continuous treatments either requires some discretization or assuming a certain functional form called dose-response function in case of bounded univariate treatments~\cite{altshuler1981modeling, gill2001causal, wang2015exposure, galagate2016causal}.

However, most existing works on continuous treatments assume that the causal effect is identifiable and, hence, the causal bias is zero. In contrast, we consider the scenario where the causal effect might be non-identifiable and develop a characterization of the bias that can be used to evaluate how far a model is from being causal. While different in methodology, a similar purpose can be found in~\cite{tran2016model, gain2018structure}. In addition, it can be used to perform sensitivity analysis in the presence of unobserved confounders in a completely non-linear framework, similarly to what is proposed in~\cite{cinelli2020making} for the linear case. Our proposed criterion for identifiability via covariate adjustment is also related to the adjustment criterion in~\cite{shpitser2012validity}, which was shown by the authors to imply and be implied by conditional ignorability~\cite{rosenbaum1983central, little2019statistical}. Our criterion is useful to build the link between identifiability and causal bias being zero.

\subsection{Preliminaries}

We assume a probabilistic model of the data generating mechanism, that is, a joint distribution over some random variables $\cV = \{V_1, \ldots , V_n\}$ and a causal directed acyclic graph (DAG) $\cG$ encoding the direct causal relationship between them~\cite{lauritzen2001causal}. We denote {\it parents}, {\it ancestors}, and {\it descendants} in this graph by $\Pa(V_i) \subset \cV$, $\An(V_i) \subset \cV$, and $\De(V_i) \subset \cV$, respectively.

We further assume that the probabilistic model and causal graph are stated in terms of a canonical structural causal model (SCM)~\cite{reason:Pearl09a} consisting of a set of independent random variables $\cU = \{U_{V_1}, \ldots, U_{V_n}\}$ with some probability densities $p(U_{V_i})$, and a set of functions $\cF = \{f_{V_1}, \ldots, f_{V_n}\}$ such that $V_i := f_{V_i}(\Pa(V_i), U_{V_i}) \ \forall i$. The {\it exogenous variables} $U_{V_i}$ are latent and cannot be observed, whereas the {\it endogenous variables} $V_i$ may or may not be observed.

Technically, an SCM defines a joint probability distribution over $\cU$ and $\cV$ satisfying the global Markov property, where marginalizing over $\cU$ yields a joint {\it observational distribution} over all endogenous variables $\cV$. We denote the graph corresponding to an SCM by $\cG^+$ and denote parents, ancestors, and descendants in $\cG^+$ accordingly by $\Pa^+(\cdot) \subset \cV \cup \cU$, $\An^+(\cdot) \subset \cV \cup \cU$, and $\De^+(\cdot) \subset \cV \cup \cU$. %Note that by construction we have $\An^+(U_{V_i})=\Pa^+(U_{V_i})=\emptyset$ and $U_{V_i} \indep U_{V_j}$ for $i \ne j$. %%not needed in the intro
Figure~\ref{fig:dag_normal} shows an example of a causal graph $\cG$ and Figure~\ref{fig:dag_reparameterized} shows a graph $\cG^+$ of an SCM that aligns with $\cG$.

\begin{figure}[h!]
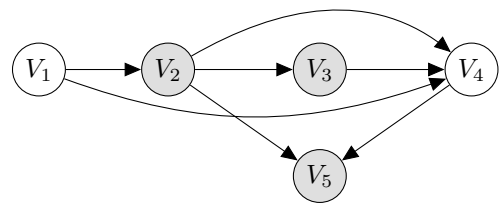
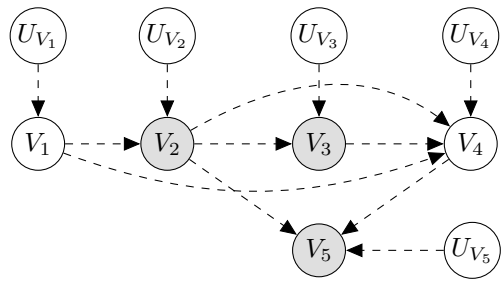

    \vspace{-10pt}
    \begin{subfigure}[b]{0.5\textwidth}
    	\centering
	\tikz{
		% nodes
		\node[latent] (V1) {$V_1$};%
		\node[obs, right=of V1] (V2) {$V_2$};%
		\node[obs, right=of V2, xshift=0.3cm] (V3) {$V_3$}; %
		\node[latent, right=of V3, xshift=0.3cm] (V4) {$V_4$}; %
		\node[obs, below=of V3, yshift=0.3cm] (V5) {$V_5$}; %
		% edges
		\draw[->] (V1) to (V2);
		\draw[->] (V1) to[out=-20,in=-160] (V4);
		\draw[->] (V2) to (V3);
		\draw[->] (V2) to (V5);
		\draw[->] (V2) to[out=30,in=140] (V4);
		\draw[->] (V3) to (V4);
		\draw[->] (V4) to (V5);
	}
    	\caption{Example of a causal graph $\cG$.}\label{fig:dag_normal}
    \end{subfigure}
    \linebreak
    \begin{subfigure}[b]{0.5\textwidth}
    	\vspace{10pt}
	\centering
    	\tikz{
		% nodes
		\node[latent] (V1) {$V_1$};%
		\node[obs, right=of V1] (V2) {$V_2$};%
		\node[obs, right=of V2, xshift=0.3cm] (V3) {$V_3$}; %
		\node[latent, right=of V3, xshift=0.3cm] (V4) {$V_4$}; %
		\node[obs, below=of V3, yshift=0.3cm] (V5) {$V_5$}; %
		\node[latent,above=of V1, yshift=-0.3cm] (U1) {$U_{V_1}$}; %
		\node[latent,above=of V2, yshift=-0.3cm] (U2) {$U_{V_2}$}; %
		\node[latent,above=of V3, yshift=-0.3cm] (U3) {$U_{V_3}$}; %
		\node[latent,above=of V4, yshift=-0.3cm] (U4) {$U_{V_4}$}; %
		\node[latent,right=of V5, xshift=0.3cm] (U5) {$U_{V_5}$}; %
		% edges
		\draw[dashed, ->] (V1) to (V2);
		\draw[dashed, ->] (V1) to[out=-20,in=-160] (V4);
		\draw[dashed, ->] (V2) to (V3);
		\draw[dashed, ->] (V2) to (V5);
		\draw[dashed, ->] (V2) to[out=30,in=140] (V4);
		\draw[dashed, ->] (V3) to (V4);
		\draw[dashed, ->] (V4) to (V5);
		\draw[dashed, ->] (U1) to (V1);
		\draw[dashed, ->] (U2) to (V2);
		\draw[dashed, ->] (U3) to (V3);
		\draw[dashed, ->] (U4) to (V4);
		\draw[dashed, ->] (U5) to (V5);
	}
	\caption{Example of reparameterized causal graph $\cG^+$.}\label{fig:dag_reparameterized}
    \end{subfigure}
    \vspace{-10pt}
    \caption{Example of a causal graph (a) and a (reparameterized) causal graph corresponding to an SCM that agrees with $\cG$. Grey and white circles denote observed and unobserved variables, respectively. Stochastic dependencies are encoded through solid arrows whereas dashed arrows highlight that the transitions are deterministic given all parents.}
\end{figure}

Among the endogenous variables, we assume a single, potentially multivariate, treatment variable $X \in \cV$ with $\cX := \{X\}$, and a single outcome variable $Y \in \cV$ with $\cY := \{Y\}$. We further denote the remaining set of {\it observed} endogenous variables by $\cO \subseteq \cV \setminus (\cX \cup \cY)$, while all other endogenous variables $\cL := \cV \setminus (\cX \cup \cY \cup \cO)$ are {\it latent}. Among the observed variables $\cO$, we denote non-descendants of $X$ by $\underline{\cO} :=\cO \setminus \De(X)$, similarly to the notation in mutilated graphs~\cite{pearl2009causality}. We denote the corresponding set of exogenous variables by $\cU_{\cO}$, $\cU_{\underline{\cO}}$, and $\cU_{\cL}$, respectively.

We use lowercase characters to denote realizations of random variables, for instance, $v_i$ for a realization of $V_i$, and we may use $p(\cdot | v_i)$ as a short form for the conditional probability density (or mass) $p(\cdot | V_i=v_i)$. We use capital letters without indices to denote concatenated vectors of random variables, for instance, $O := [O_j]_{j=1}^m$ with $\cO := \{O_1,\ldots,O_m\}$, and $U_{\cO} := [U_{O_j}]_{j=1}^m$. Likewise, realizations without indices refer to vectors of realizations like $o := [o_j]_{j=1}^m$, and $f_O$ denotes the concatenation of all $f_{O_j}$, which is a vector-valued function with combined set of arguments.

For a generic function $h$ and argument(s) $x$, the Jacobian of $h$ with respect to $x$ is denoted by $\nabla_{x}h$ if it exists. Note that if $h$ is a scalar function, $\nabla_{x}h$ denotes a transposed gradient. We use $\bE$, $\Cov$ and $\Var$ for expectation, covariance, and variance. We may use subscripts to specify the underlying probability density, e.g.~$\bE_{Y|x,o}$ means that the expectation is taken with respect to $p(Y|x,o)$.

\section{Identification, quantification and estimation}\label{sec:identification_quantification_inference}

In this section we (i) introduce a novel criterion for identifiability of causal effects via covariate adjusts; (ii) we derive closed-form expressions for causal effect and causal bias that can be computed in closed-form; and (iii) show how the latter can be estimated via standard probabilistic inference methods.

\subsection{A characterization of identifiability via covariate adjustment}\label{sec:identification}

This section offers a complete criterion for identifiability of causal effects~\cite{pearl2009causality} via covariate adjustment~\cite{shpitser2012validity} in the framework of structural causal models. %By covariate adjustment we mean deciding which subset of variables $V_i$ to observe.  %% Not sure if this makes anything clearer
The underlying idea is to first construct a minimal set of exogenous variables, $\cU^C \subseteq \cU$, such that, together with variables in $\cX$ and $\underline{\cO}$, this set fully characterizes the randomness of outcome variable $Y$.
%In terms of potential outcomes, $Y^x$ \cite{angrist2008mostly}, $C$ can be thought as a set of random variables generating the $\sigma$-algebra of $Y^x$ given $O$. % This doesn't help here as $Y^x$ has not been introduced.
To this end, let us again consider the causal graph in Figure~\ref{fig:dag_normal} as a running example.
\begin{example}\label{ex:running_example}
We assume the causal graph in Figure~\ref{fig:dag_normal} and $X = V_2$, $Y = V_4$, $\cO = \{V_3, V_5\}$, and $\cL = \{V_1\}$. In this case, it is well understood that the unobserved variable $V_1$ and the observed variables $V_3$ and $V_5$ each compromise identifiability of the causal effect from $X$  to $Y$. Latent variable $V_1$ is a hidden {\it confounder} introducing confounding bias, observing {\it mediator} $V_3$ introduces overcontrol bias, and observing {\it collider} $V_5$ introduces (endogenous) selection bias~\cite{elwert2013graphical}. Figure~\ref{fig:dag_reparameterized} is the causal graph $\cG^+$ of an SCM aligning with $\cG$. We notice that \[  U_{V_1}\not\indep X, \quad U_{V_3}\not\indep X |V_3,\quad U_Y\not\indep X | V_5, \] which we will see in Theorem~\ref{thm:causal_identification} to be the very reasons why the causal effect is not identifiable through covariate adjustment.
\end{example}

\begin{definition}\label{def:causal_set}
For any set of treatment variables $\cX$ and outcome variable $Y$, the set of exogenous random variables which are ancestors of $Y$ in $\cG^+$ and that are not independent of $Y$ given $\cX$ and $\underline{\cO}$ is called the \textit{causal exogenous set}:
\begin{equation}\label{eq:causal_set}
        \cU^C :=\{U_{V_i}\in \cU\cap \An^+(Y): U_{V_i}\not\indep Y | \cX, \underline{\cO}\}.
\end{equation} 
\end{definition}

Intuitively, this set contains all exogenous variables which we would need to control for to translate the observational setting into an experimental setting. In Example~\ref{ex:running_example} this set includes random variables $U_{V_1}$, $U_{V_3}$, and $U_{Y}$. While we can never observe exogenous variables, we might be able to obtain a posterior distribution over them by making (weak) prior distributional assumptions. If this posterior distribution is insensitive to the treatment variable, the causal effect can be (partially) identified. In the following, we state this observation more formally.

\vspace{5pt}
\begin{assumption}\label{as:positive}
We assume that the given SCM is defined through continuous exogenous variables $U_{V_i}$ where the probability density exists and is strictly positive for all possible values of $U_{V_i}$.
\end{assumption}

\vspace{5pt}
\begin{assumption}\label{as:faithful}
We assume that the given SCM is \textit{faithful} to the causal graph $\cG^+$, that is, any independences that hold in the model must hold in the causal graph.
\end{assumption}

\vspace{5pt}
\begin{theorem}\label{thm:causal_identification}
Given an SCM that satisfies Assumption~\ref{as:positive} with causal exogenous set $\cU^C$ for treatment variable $X$, outcome variable $Y$, and a set of observations $\cO$, the causal effect of $X$ on $Y$ can be identified by covariate adjustment if, and only if,
	\begin{equation}\label{eq:causal_identification}
	U_{V_i} \indep X | \cO \ \ \ \forall \ U_{V_i} \in \cU^C
	\end{equation}
\end{theorem}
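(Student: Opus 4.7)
The core idea is to exploit the deterministic structure of the SCM to express $Y$ in terms of the treatment $X$, the non-descendant observations $\underline{\cO}$, and the causal exogenous set $\cU^{C}$, and then to compare the observational conditional $p(Y\mid X=x,\cO=o)$ with its interventional counterpart $p(Y\mid do(X=x),\cO=o)$. The first preparatory step is a \emph{functional reduction}: because $Y=f_{Y}(\Pa(Y),U_{Y})$ and every endogenous ancestor is itself a deterministic function of its exogenous ancestors, $Y$ is a measurable function of $\cU\cap\An^{+}(Y)$. Any $U\in(\cU\cap\An^{+}(Y))\setminus\cU^{C}$ satisfies $U\indep Y\mid \cX,\underline{\cO}$ by Definition~\ref{def:causal_set}, hence, via faithfulness (Assumption~\ref{as:faithful}), is d-separated from $Y$ by $\cX\cup\underline{\cO}$ in $\cG^{+}$, which together with positivity (Assumption~\ref{as:positive}) lets us absorb its contribution into $X$ and $\underline{\cO}$ and write $Y=g(X,\underline{\cO},U^{C})$ almost surely.

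For the sufficient direction ($\Leftarrow$), assume $U\indep X\mid\cO$ for every $U\in\cU^{C}$. Using the reduction,
\[
 p(Y\mid X=x,\cO=o)=\int \delta\bigl(Y-g(x,\underline{o},u^{C})\bigr)\,p(u^{C}\mid X=x,\cO=o)\,du^{C},
\]
and the hypothesis collapses $p(u^{C}\mid X=x,\cO=o)$ to $p(u^{C}\mid\cO=o)$. On the interventional side, the truncated factorization leaves the exogenous prior and all structural equations for $V_{i}\neq X$ intact, so the same identity $Y=g(x,\underline{\cO},U^{C})$ holds in the mutilated model and the posterior of $U^{C}$ given $\cO=o$ there coincides with the observational $p(u^{C}\mid\cO=o)$. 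The two conditional densities of $Y$ therefore agree, and integrating against $p(\cO=o)$ produces the covariate-adjustment identity $p(Y\mid do(X=x))=\int p(Y\mid X=x,\cO=o)p(\cO=o)\,do$.

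For the necessary direction ($\Rightarrow$) I would proceed by contraposition: if some $U_{V_{i}}\in\cU^{C}$ violates $U_{V_{i}}\indep X\mid\cO$, faithfulness promotes this to a d-connecting path between $U_{V_{i}}$ and $X$ given $\cO$ in $\cG^{+}$, while membership in $\cU^{C}$ supplies a d-connecting trail from $U_{V_{i}}$ to $Y$ given $\cX\cup\underline{\cO}$. Splicing the two exposes a non-causal source of dependence between $X$ and $Y$ that conditioning on $\cO$ does not neutralise; by faithfulness and positivity this produces a strict inequality $p(Y\mid do(X=x),\cO=o)\neq p(Y\mid X=x,\cO=o)$ on a set of $o$ of positive measure, which rules out identifiability by adjustment.

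\textbf{Main obstacle.} The most delicate step is verifying that the posterior $p(u^{C}\mid\cO=o)$ computed under the observational law really matches the one in the mutilated SCM, particularly when $\cO$ contains descendants of $X$ as in Example~\ref{ex:running_example}. This requires leveraging the minimality built into the definition of $\cU^{C}$ together with positivity so that the likelihood factors from the severed edges into $X$ cancel cleanly; the necessary direction also rests crucially on faithfulness, which is what converts graphical non-separation into genuine distributional discrepancy.
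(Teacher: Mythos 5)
Your strategy differs from the paper's: the paper proves that condition \eqref{eq:causal_identification} is equivalent to the adjustment criterion of \cite{shpitser2012validity} by a purely graphical case analysis and then invokes that criterion's known completeness, whereas you attempt a direct distributional argument. A direct argument could in principle work, but as written both directions have genuine gaps. In the sufficiency direction you condition on $\cO=o$ \emph{in the mutilated model}, assert without proof that the post-intervention posterior of $U^C$ given $\cO=o$ equals the observational one, and then integrate $p(Y\mid do(x),\cO=o)$ against the \emph{observational} marginal $p(\cO=o)$. That last step is not the adjustment identity: $p(Y\mid do(x))=\int p(Y\mid do(x),o)\,p(o\mid do(x))\,do$, and $p(o\mid do(x))\neq p(o)$ whenever $\cO$ contains descendants of $X$, which condition \eqref{eq:causal_identification} (and the adjustment criterion) permits. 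The correct route is the cross-world one: write the potential outcome as $Y^{x}=g(x,\underline{\cO},U^{C})$ with the \emph{factual} $\underline{\cO}$ and $U^{C}$, condition on the factual $\cO=o$, and use the hypothesis $p(u^{C}\mid x,o)=p(u^{C}\mid o)$ to identify $p(Y^{x}\mid o)$ with $p(Y\mid x,o)$ before integrating against $p(o)$; your mutilated-world conditioning conflates the intervened and factual covariates, and the posterior-equality you assert is exactly the thing that would need proof.

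The necessity direction is essentially asserted rather than proved. The two d-connections you invoke hold under \emph{different} conditioning sets ($\cO$ versus $\cX\cup\underline{\cO}$), and splicing two open paths at $U_{V_i}$ does not in general produce a path that is open given $\cO$: the junction can be blocked, and when $V_i$ lies on a causal path the obstruction is of the ``observed descendant of a mediator'' type rather than an open non-causal path. This is precisely why the paper's proof of this direction is a multi-case analysis (whether $V_i=Y$, whether $V_i\in\cO$, whether the blocking variables lie in $\cO\cap\De(X)$, etc.). Moreover, even granting an open non-causal path from $X$ to $Y$ given $\cO$, concluding that the adjustment functional actually fails, $p(Y\mid do(x))\neq\int p(Y\mid x,o)\,p(o)\,do$, does not follow from faithfulness alone, which only converts d-connection into observational dependence; turning the graphical violation into failure of identifiability by adjustment is the nontrivial completeness half of the adjustment criterion, which the paper imports from \cite{shpitser2012validity} rather than reproves. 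Finally, note the theorem is stated under Assumption~\ref{as:positive} only, while your argument leans on faithfulness in both directions, so at minimum that reliance would have to be made explicit.
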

\begin{proof}
    See Appendix \ref{app:causal_identification}.
\end{proof}

\begin{corollary}\label{cor:grad_p_zero}
Given an SCM that satisfies Assumptions~\ref{as:positive}-\ref{as:faithful} where the causal effect of $X$ on $Y$ can be identified by covariate adjustment and which is differentiable in $X$, then
\begin{equation}\label{eq:grad_p_zero}
    \left. \nabla_X \log p(U^C | X,O)\right|_{X=x} = 0 \ \ \ \forall \ x,
\end{equation}
where $U^C$ denotes the vector of all endogenous variables in $\cU^C$ and $O$ denotes the vector of observations.
\end{corollary}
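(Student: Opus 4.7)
The plan is to lift the pointwise conditional independences supplied by Theorem~\ref{thm:causal_identification} to a joint conditional independence of the whole exogenous vector $U^C$, and then observe that a conditional density that does not depend on $X$ has vanishing gradient in $X$.

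First, by Theorem~\ref{thm:causal_identification}, identifiability via covariate adjustment yields $U_{V_i}\indep X \mid \cO$ for every individual $U_{V_i}\in\cU^C$. Next, by Assumption~\ref{as:faithful} (faithfulness), each such probabilistic independence corresponds to a d-separation of $U_{V_i}$ from $X$ given $O$ in the augmented graph $\cG^+$. By the standard definition of d-separation between sets (no active path from any element of one set to any element of the other given the conditioning set), the individual d-separations of the $U_{V_i}\in\cU^C$ from $X$ given $O$ combine directly into the joint d-separation of $\cU^C$ from $X$ given $O$.

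Applying the global Markov property of $\cG^+$, which any SCM satisfies by construction, to this joint d-separation yields the joint conditional independence $U^C\indep X \mid O$, which in turn gives the pointwise identity $p(U^C\mid X, O) = p(U^C\mid O)$; Assumption~\ref{as:positive} guarantees that the conditional densities are well-defined and strictly positive, so that the logarithm appearing in~\eqref{eq:grad_p_zero} is meaningful.

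Finally, since $p(U^C\mid X, O)$ is differentiable in $X$ by hypothesis and does not depend on $X$ by the previous step, its gradient (and the gradient of its logarithm) with respect to $X$ vanishes identically, i.e.\ $\nabla_X \log p(U^C\mid X, O) = 0$ for all $x$. The only conceptually non-routine step is the lift from individual to joint conditional independence: this is exactly where faithfulness is essential, since pairwise conditional independence does not entail joint conditional independence without additional structural input, whereas d-separation between sets handles this compositionality automatically in the graph.
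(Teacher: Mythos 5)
Your proposal is correct and follows essentially the same route as the paper's one-line proof: invoke Theorem~\ref{thm:causal_identification} to get $p(U^C\mid X,O)=p(U^C\mid O)$ and then note that the gradient of the logarithm in $X$ vanishes. You are in fact more careful than the paper, since you explicitly justify the lift from the elementwise independences $U_{V_i}\indep X\mid \cO$ to the joint statement $U^C\indep X\mid O$ via faithfulness, set-level d-separation, and the global Markov property — precisely the step the paper's proof leaves implicit and the reason Assumption~\ref{as:faithful} appears in the corollary's hypotheses.
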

\begin{proof}
According to Theorem~\ref{thm:causal_identification}, treatment variable $X$ is independent of all elements of the causal exogenous set given the observed variables, $p(U^C|X,O)=p(U^C|O)$. If the density defined through the SCM exists and is differentiable w.r.t. $X$, the partial gradient $\nabla_X \log p(U^C | X,O)$ is zero everywhere.
\end{proof}

\subsection{Quantification of average partial effect and bias}\label{sec:quantification}

Corollary~\ref{cor:grad_p_zero} gives rise to a measure of causal bias which is zero if the causal bias is zero. To formalize this, let us introduce the notation of {\it partial evaluation}, which is related to the potential outcome notation and interventions in structural causal models.

\vspace{5pt}
\begin{definition}
Given a function $f$ with arguments $\cA = \{V_{i_1},\ldots,V_{i_m}\} \subseteq \cV$ and some set $\cB = \{V_{j_1},\ldots,V_{j_k}\} \subseteq \cV$, we define $f^b$ to be the {\it partial evaluation} of $f$ where all arguments $\cA \cap \cB$ have been assigned to the corresponding value $b=[b_1,\ldots,b_k]$, that is
    \begin{equation}
        f^{\cB\leftarrow b}(\cA \setminus \cB) := f(V_{i_1},\dots,V_{i_m})\big|_{V_{j_1}=b_1,\ldots,V_{j_k}=b_k}.
    \end{equation}
\end{definition}

We may use the short form $f^b$ for $f^{\cB\leftarrow b}(\cA \setminus \cB)$ if the corresponding set of assigned random variables $\cB$ is clear from the context, for example, all observed variables $\cO$. Likewise, if sets $\cB$ and $\cC$ are disjoint sets of random variables, we use $f^{b,c}$ to denote simultaneously assignments $\cB \leftarrow b$ and $\cC \leftarrow c$.

\vspace{5pt}
\begin{example}
Consider again the causal graph in Figure~\ref{fig:dag_reparameterized} with $X = V_2$ with realization $x$, $\cO = \{V_3, V_5\}$ with realization $o = [o_1, o_2]$, $Y = V_4$, and $\cL = \{V_1\}$. Here, for instance, we have
\begin{eqnarray*}
f_Y^x &=& f_Y(V_1, x, f_{V_3}(x, U_{V_3}), U_Y),\\
f_{V_5}^{x,o} &=& f_{V_5}(x, f_Y(V_1, x, o_1, U_Y), U_{V_5}).
\end{eqnarray*}
If derivatives are well-defined, this for example implies $\nabla_{U_{V_3}} f_{V_5}^{x,o}=0$, while in general $\nabla_{U_{V_3}} f_{V_3} \ne 0$.
\end{example}

\vspace{5pt}
\begin{proposition}\label{thm:marginal_decomposition}
Given an SCM that satisfies Assumptions~\ref{as:positive}-\ref{as:faithful} with a vector of causal exogenous variables $U^C$ for treatment variable $X$, outcome variable $Y$, and observations $O$. Suppose that $f_Y$ and $p$ are differentiable with respect to $X$ and satisfy sufficient regularity assumptions. Then 
\begin{eqnarray}
  \underbrace{\nabla_x\bE_{Y|x,o}[Y]}_{\cA(x, o)} &\!\!=\!\!& \underbrace{\bE_{U^C|x,o}\left[\nabla_x f_Y^{x,\underline{o}}\right]}_{\cC(x, o)} \ + \nonumber\\ 
  && \underbrace{\bE_{U^C|x,o}\left[f_Y^{x,\underline{o}} \nabla_x\! \log p(U^C|x, o)\right]}_{\cB(x, o)}\!. \ \ \ \ \label{eq:marginal_decomposition}
\end{eqnarray}
If the causal effect of $X$ on $Y$ can be identified by covariate adjustment then $\cB(x, o)=0$.
\end{proposition}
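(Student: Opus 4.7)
The plan is to realize $\bE_{Y|x,o}[Y]$ as an integral against the posterior $p(U^C\mid x,o)$, differentiate under the integral via Leibniz, apply the product rule on the integrand, and read off the two terms as $\cC(x,o)$ and $\cB(x,o)$; the final claim then drops out of Corollary~\ref{cor:grad_p_zero}.

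\textbf{Step 1 (integral representation).} I would first establish
\[
\bE_{Y|x,o}[Y] \;=\; \int f_Y^{x,\underline{o}}(u^C)\, p(u^C\mid x,o)\, du^C.
\]
By the tower property, $\bE_{Y|x,o}[Y] = \bE_{U^C|x,o}\!\left[\bE[Y\mid x,o,U^C]\right]$, so the task reduces to showing $\bE[Y\mid x,o,u^C] = f_Y^{x,\underline{o}}(u^C)$. Unfolding the SCM recursively expresses $Y$ as a deterministic function of its exogenous ancestors $U^{\An}$; splitting $U^{\An}=U^C\cup U'$, Definition~\ref{def:causal_set} gives $U'\indep Y\mid \cX,\underline{\cO}$. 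Under Assumption~\ref{as:faithful}, this independence forces every directed path from $U'$ to $Y$ in $\cG^+$ to be intercepted by $\cX\cup\underline{\cO}$, so partially evaluating the unfolded expression at $X=x$ and $\underline{O}=\underline{o}$ leaves a function of $U^C$ alone, which is precisely $f_Y^{x,\underline{o}}$.

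\textbf{Step 2 (differentiate under the integral).} The regularity hypotheses and Assumption~\ref{as:positive} justify interchanging $\nabla_x$ with the integral. Applying the product rule gives
\begin{align*}
\nabla_x \bE_{Y|x,o}[Y] &= \int \big(\nabla_x f_Y^{x,\underline{o}}\big)\, p(u^C\mid x,o)\, du^C\\
&\quad+ \int f_Y^{x,\underline{o}}\, \nabla_x p(u^C\mid x,o)\, du^C.
\end{align*}
The first summand is $\cC(x,o)$ by definition. For the second, write $\nabla_x p(u^C\mid x,o) = p(u^C\mid x,o)\,\nabla_x \log p(u^C\mid x,o)$ (valid since $p>0$ by Assumption~\ref{as:positive}), which recasts it as $\cB(x,o)$. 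This yields \eqref{eq:marginal_decomposition}.

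\textbf{Step 3 (identifiable case).} If the causal effect is identifiable by covariate adjustment, Corollary~\ref{cor:grad_p_zero} gives $\nabla_x \log p(U^C\mid x,o)=0$ pointwise, so the integrand defining $\cB(x,o)$ vanishes identically and $\cB(x,o)=0$.

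The main obstacle is Step~1: turning the conditional expectation of $Y$ into a clean integral of $f_Y^{x,\underline{o}}$ against $p(U^C\mid x,o)$ requires a careful graphical argument showing that exogenous ancestors of $Y$ outside $\cU^C$ truly influence $Y$ only through $\cX$ and $\underline{\cO}$, so fixing those endogenous values absorbs their contribution. Steps~2 and~3 are then essentially bookkeeping under the stated regularity.
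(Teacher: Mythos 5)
Your proposal is correct and follows essentially the same route as the paper's proof: represent $\bE_{Y|x,o}[Y]$ as $\int f_Y^{x,\underline{o}}\,p(u^c|x,o)\,du^c$, differentiate under the integral with the product rule and the identity $\nabla_x p = p\,\nabla_x\log p$, and invoke Corollary~\ref{cor:grad_p_zero} for the identifiable case. Your Step~1 merely spells out (via the tower property and a faithfulness/d-separation argument) what the paper asserts more briefly, namely that conditioned on $x,o$ the outcome is determined by $U^C$ through $f_Y^{x,\underline{o}}$.
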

\begin{proof}
	See Appendix \ref{app:marginal_decomposition}.
\end{proof}

In the above proposition, expression $\cA(x, o)$ states the partial derivative of the average observational outcome, $\cC(x, o)$ states average partial effect on the treated (APET) for an infinitesimal change in the treatment, and $\cB(x, o)$ is the causal bias.

\begin{remark}
The decomposition in~\eqref{eq:marginal_decomposition} has a well-known discrete counterpart for some reference treatment $x$ (control) and proposal $x' := x + h$ (treated),
\begin{eqnarray*}
\underbrace{\bE[Y|x'] - \bE[Y|x]}_{\text{association}} &\!\!=\!\!& \underbrace{\bE[Y^{x'}-Y^x|x']}_{\text{ATET}} \ + \\
&& \underbrace{\bE[Y^x|x'] - \bE[Y^x|x]}_{\text{bias}} \, ,\ 
\end{eqnarray*}
where association is defined as the contrast in average outcomes by observed treatment value, and ATET refers to the average treatment effect on the treated. If we condition on $O=o$, divide both sides by $h$ and take the limit for $h\to 0$, we recover~\eqref{eq:marginal_decomposition}. 
\end{remark}

The expressions for $\cC(x,o)$ and $\cB(x,o)$ in~\eqref{eq:marginal_decomposition} highlight the connection with Theorem~\ref{thm:causal_identification}. However, in practice we do not need to find the causal exogenous set $\cU^C$ in order to compute them. Indeed, because $\nabla_x f_Y^{x,\underline{o}}$ does not depend on $\cU\setminus \cU^C$ by definition of $\cU^C$, via marginalization we can write
\begin{equation}\label{eq:marginal_causal_effect}
    \cC(x,o)=\bE_{U|x,o}\left[\nabla_x f_Y^x\right],
\end{equation}
where $U$ denotes the vector of all exogenous variables and we are free to remove the assignment of $\underline{o}$. Regarding $\cB$, the main challenge of the expression in~\eqref{eq:marginal_decomposition} is to compute the term $\nabla_x \log p(U^C|x,o)$, since $x$ appears in the normalization constant. Theorem~\ref{thm:bias_as_covariance} provides an elegant formulation of the bias that is easier to compute in practice. Under some additional assumptions Theorem~\ref{thm:bias_rewrite} provides an alternative formulation of $\cB$ that is automatically differentiable and does not depend on $U^C$. 

\begin{theorem}\label{thm:bias_as_covariance}
Under the assumptions of Proposition \ref{thm:marginal_decomposition}, the bias in \eqref{eq:marginal_decomposition} can be rewritten as 
\begin{equation}\label{eq:bias_as_covariance}
    \cB(x,o) = \Cov_{U^C|x,o}\left(f_Y^{x,\underline{o}}, \nabla_x \log p(U^C, x, o)\right),
\end{equation}
where $p(U^C, x, o)$ denotes the joint density over $U^C$, $X$, and $O$ (partially) evaluated for $x$ and $o$.
\end{theorem}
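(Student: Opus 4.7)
The plan is to transform the expectation formula for $\cB(x,o)$ from Proposition~\ref{thm:marginal_decomposition} into a covariance over $U^C \mid x,o$, and then rewrite the score in terms of the joint density.

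First I would note that $\cB(x,o) = \bE_{U^C|x,o}[f_Y^{x,\underline{o}} \,\nabla_x \log p(U^C|x,o)]$ has the structure of a score-weighted expectation. The standard score-function identity gives $\bE_{U^C|x,o}[\nabla_x \log p(U^C|x,o)] = 0$, which I would prove by exchanging gradient and integral:
\begin{equation*}
\bE_{U^C|x,o}[\nabla_x \log p(U^C|x,o)] = \int \nabla_x p(U^C|x,o)\, dU^C = \nabla_x \! \int p(U^C|x,o)\, dU^C = 0.
\end{equation*}
Hence subtracting $\bE_{U^C|x,o}[f_Y^{x,\underline{o}}]\,\bE_{U^C|x,o}[\nabla_x \log p(U^C|x,o)] = 0$ from the definition of $\cB$ yields
\begin{equation*}
\cB(x,o) = \Cov_{U^C|x,o}\bigl(f_Y^{x,\underline{o}},\, \nabla_x \log p(U^C|x,o)\bigr).
\end{equation*}

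Next I would replace the conditional score with the joint score. Using $\log p(U^C,x,o) = \log p(U^C|x,o) + \log p(x,o)$ and differentiating in $x$,
\begin{equation*}
\nabla_x \log p(U^C,x,o) = \nabla_x \log p(U^C|x,o) + \nabla_x \log p(x,o).
\end{equation*}
The second term depends on $x$ and $o$ but not on $U^C$, so it is deterministic under the conditional measure $p(U^C \mid x, o)$. Since covariance is invariant under additive shifts by constants (w.r.t.\ the random variable being averaged over),
\begin{equation*}
\Cov_{U^C|x,o}\bigl(f_Y^{x,\underline{o}}, \nabla_x \log p(U^C,x,o)\bigr) = \Cov_{U^C|x,o}\bigl(f_Y^{x,\underline{o}}, \nabla_x \log p(U^C|x,o)\bigr),
\end{equation*}
which, combined with the previous display, gives exactly~\eqref{eq:bias_as_covariance}.

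The only delicate step is the differentiation-under-the-integral move in the score-function identity; this is precisely the kind of condition subsumed by the ``sufficient regularity assumptions'' already invoked in Proposition~\ref{thm:marginal_decomposition} (together with Assumption~\ref{as:positive}, which ensures the conditional density is well-defined and strictly positive, so that $\log p(U^C|x,o)$ is well-defined almost surely). Beyond that, the argument is a short bookkeeping calculation rather than a substantive obstacle.
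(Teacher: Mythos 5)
Your proposal is correct and follows essentially the same route as the paper's proof: both rest on the decomposition $\nabla_x \log p(U^C|x,o) = \nabla_x \log p(U^C,x,o) - \nabla_x \log p(x,o)$ together with an interchange of $\nabla_x$ and the integral over $U^C$, and then assemble the covariance. The only cosmetic difference is ordering — you invoke the zero-mean score identity $\bE_{U^C|x,o}[\nabla_x \log p(U^C|x,o)] = 0$ and shift by the constant $\nabla_x \log p(x,o)$, whereas the paper equivalently shows $\nabla_x \log p(x,o) = \bE_{U^C|x,o}[\nabla_x \log p(U^C,x,o)]$ and substitutes directly.
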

\begin{proof}
  See Appendix \ref{app:bias_as_covariance}.
\end{proof}

The expression of the bias in Theorem~\ref{thm:bias_as_covariance} is a clear improvement over~\eqref{eq:marginal_decomposition} as the joint $p(U^C,X,O)$ does not include a normalization constant involving $X$. If no mediators between $X$ and $Y$ are observed (e.g.~$V_3$ in Example~\ref{ex:running_example}),~\eqref{eq:bias_as_covariance} can already provide a practical way to estimate the bias. Otherwise it may hide a difficulty: If the factorization of the joint includes a Dirac delta distribution depending on $X$, computing its gradient with respect to $X$ requires some extra analysis.

For instance, consider Example~\ref{ex:running_example} where the joint contains conditional $p(V_3|X,U_{V_3})$ as $V_3 \in \cO$. Since this conditional is a Dirac delta distribution, how can we then compute the gradient of $\log p(U^C,X,O)$ w.r.t. $X$? The following theorem further manipulates the bias to remove this obstacle.

\begin{theorem}\label{thm:bias_rewrite}
Suppose that $f_Y$ is differentiable in $X$ and all $V_i \in \cO$, and that corresponding functions $f_X$ and $f_{V_i}$ are differentiable and invertible in $U_X$ and $U_{V_i}$, respectively. Then the bias in~\eqref{eq:marginal_decomposition} can be rewritten as
    \begingroup\makeatletter\def\f@size{8}\check@mathfonts
    \def\maketag@@@#1{\hbox{\m@th\normalsize\normalfont#1}}
    \begin{align}\label{eq:bias_rewrite}
    \cB(x,o) \ = \!\!
    \sum_{V_i\in \cX\cup \cO}\!\!\!\!\!\!\bE_{U|x,o}\Big[&\Big(\nabla_{U_{V_i}} f_Y^x \!+\! \left(f_Y^x-y^x\right)\nabla_{U_{V_i}}\!\log p(U_{V_i})\notag \Big)\\&\ (\nabla_{U_{V_i}}f_{V_i}^{x,o})^{-1}\, \nabla_x (f_{V_i}^{x,o}  - v_i)\Big]
    \end{align}
    \endgroup
with potential outcome $y^x := \bE_{U|x,o}[f_Y^x]$ for treatment $x$.
\end{theorem}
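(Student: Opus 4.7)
The plan is to re-express the bias from Theorem~\ref{thm:bias_as_covariance} by reparametrizing the conditional posterior via the invertibility of the structural equations, and then to apply the chain rule and the implicit function theorem to expose the structural factors. First I would start from the equivalent form
\[
\cB(x,o) \;=\; \nabla_x \bE_{U|x,o}[f_Y^x] \;-\; \bE_{U|x,o}[\nabla_x f_Y^x],
\]
obtained by differentiating under the integral as in Proposition~\ref{thm:marginal_decomposition} but applied with the full exogenous vector $U$, together with~\eqref{eq:marginal_causal_effect} for the APET piece. By the assumed invertibility of $f_X$ in $U_X$ and of $f_{V_i}$ in $U_{V_i}$ for $V_i\in\cO$, the structural equations can be solved topologically to obtain $U_{V_i}^* = g_{V_i}(v_i,\Pa(V_i))$ as smooth functions of $x$, $o$ and the remaining exogenous $U^{\mathrm{rest}} := U\setminus U_{\cX\cup\cO}$. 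The change-of-variables formula then converts the Dirac-delta structure of $p(U|x,o)$ into a proper density over $U^{\mathrm{rest}}$ with factor $\prod_{V_i\in\cX\cup\cO} p(U_{V_i}^*)\,|\nabla_{U_{V_i}}f_{V_i}^{x,o}|^{-1}$, and a deterministic collapse $U_{V_i}=U_{V_i}^*$ on the remaining coordinates.

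Next I would differentiate $y^x$ under this reparametrized integral. The product rule produces a chain-rule contribution $\sum_{V_i\in\cX\cup\cO}(\nabla_{U_{V_i}}f_Y^x)\,\nabla_x U_{V_i}^*$ together with a density-derivative contribution $\bE_{U^{\mathrm{rest}}|x,o}[f_Y^x\,\nabla_x\log p(U^{\mathrm{rest}}|x,o)]$. Implicit differentiation of $v_i = f_{V_i}^{x,o}|_{U_{V_i}^*}$ yields the compact identity
\[
\nabla_x U_{V_i}^* \;=\; (\nabla_{U_{V_i}} f_{V_i}^{x,o})^{-1}\,\nabla_x(v_i - f_{V_i}^{x,o}),
\]
which (up to the overall sign) reproduces the structural factor in~\eqref{eq:bias_rewrite}; subtracting $\cC(x,o)$ cancels the direct $\partial_x f_Y^x$ piece and leaves the chain-rule term. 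Expanding $\log p(U^{\mathrm{rest}}|x,o)$ using the factorization from the previous paragraph, the score pieces contribute $\nabla_{U_{V_i}}\log p(U_{V_i})\,\nabla_x U_{V_i}^*$, and after centering $f_Y^x$ by $y^x$ (which kills the $\log p(x,o)$ normalization) these produce the second summand of~\eqref{eq:bias_rewrite}.

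The main obstacle is to handle the Jacobian log-gradients $\nabla_x\log|\nabla_{U_{V_i}}f_{V_i}^{x,o}|$ that also appear in the density-derivative term. These must combine with the chain-rule piece to yield precisely the $\nabla_{U_{V_i}}f_Y^x$ coefficient of~\eqref{eq:bias_rewrite}, which I would justify by a Stein-type integration by parts along each $U_{V_i}$ under its conditional distribution; Assumption~\ref{as:positive} (strict positivity and smoothness of $p(U_{V_i})$) is what kills the boundary terms. Finally, replacing the expectation over $U^{\mathrm{rest}}$ by the expectation over the full $U$ as written in~\eqref{eq:bias_rewrite} is valid because $U_{V_i}=U_{V_i}^*$ holds deterministically under $p(U|x,o)$ for every $V_i\in\cX\cup\cO$.
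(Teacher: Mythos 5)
Your overall route---replacing the paper's Dirac-delta calculus by an explicit reparametrization of $p(U|x,o)$ onto the latent exogenous $U_{\cL}$, followed by differentiation under the integral and implicit differentiation of $U_{V_i}^*$---is a legitimate alternative to the paper's proof, which instead starts from the covariance form of Theorem~\ref{thm:bias_as_covariance}, factorizes the joint with Dirac deltas, and integrates by parts against the delta derivatives. Most of your individual steps are sound: the starting identity $\cB(x,o)=\nabla_x\bE_{U|x,o}[f_Y^x]-\cC(x,o)$, the change of variables producing the factors $p(U_{V_i}^*)\,|\det\nabla_{U_{V_i}}f_{V_i}^{x,o}|^{-1}$, the implicit-function identity $\nabla_x U_{V_i}^*=(\nabla_{U_{V_i}}f_{V_i}^{x,o})^{-1}\nabla_x(v_i-f_{V_i}^{x,o})$ (which reproduces the structural factor with the sign carried by the final display of the paper's appendix, rather than by \eqref{eq:bias_rewrite} as printed), and the centering step that removes the $\nabla_x\log p(x,o)$ normalization.

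The genuine gap is exactly the step you flag as the main obstacle, and your proposed resolution does not add up. In your own decomposition the chain-rule piece $\sum_i\nabla_{U_{V_i}}f_Y^x\,\nabla_x U_{V_i}^*$ already accounts in full for the first summand of \eqref{eq:bias_rewrite}, and the prior-score piece (after centering) for the second; hence the Jacobian contribution $-\sum_i\bE_{U|x,o}\bigl[(f_Y^x-y^x)\,\nabla_x\log\bigl|\det\nabla_{U_{V_i}}f_{V_i}^{x,o}(U_{V_i}^*)\bigr|\bigr]$ cannot ``combine with the chain-rule piece to yield the $\nabla_{U_{V_i}}f_Y^x$ coefficient''---it is a leftover term that must be shown to vanish on its own, or else the stated formula would acquire an extra term. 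Moreover, a Stein-type integration by parts ``along each $U_{V_i}$ under its conditional distribution'' is not available in your parametrization: under $p(U|x,o)$ each $U_{V_i}$ with $V_i\in\cX\cup\cO$ is a point mass at $U_{V_i}^*$, so any integration by parts would have to be carried out in the $U_{\cL}$ coordinates (or in the observed values), and you do not show how that kills the term; Assumption~\ref{as:positive} alone does not. This is precisely the point the paper's proof handles with its claim that $\nabla_{g_{z_i}}\log|\det(\nabla_{u_{Z_i}}g_{z_i})|=0$ after its change of variables; you need an argument of that kind (or a reason why the composed log-determinant is constant in $U_{\cL}$ at fixed $x,o$, so that centering annihilates it), and it is absent. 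Until that term is disposed of, the proof is incomplete.
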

\begin{proof}
    See Appendix \ref{app:bias_rewrite}.
\end{proof}

\begin{remark}
Note that (i) all terms in the expectation of~\eqref{eq:bias_rewrite} can be efficiently computed via automatic differentiation. (ii) Bias $\cB(x,o)$ can be decomposed in the contributions given by $X$ (potential confounding bias) and by each variable $V_i \in \cO$ (potential overcontrol or selection bias). %(iii) The term $\nabla_{U_{V_i}}f_Y^x$ is zero if $V_i$ is $X$ or if it is a descendant of $Y$. 
(iii) If $U_{V_i}$ is modeled as a (multivariate) standard Gaussian, then $\nabla_{U_{V_i}}\log p(U_{V_i}) = -U_{V_i}$. Finally, (iv) the inverse Jacobian $(\nabla_{U_{V_i}}f_{V_i}^{x,o})^{-1}$ can be efficiently computed if $V_i$ is low-dimensional or if $f_{V_i}$ has, for example, a normalizing flow structure~\cite{papamakarios2019normalizing}. %Finally, (v) The term $\nabla_x(f_{V_i}^{x,o} \! - v_i)$ is $-1$ if $V_i$ is $X$, otherwise it is just $\nabla_x f_{V_i}^{x,o}$.
\end{remark}

\subsection{Estimation of average partial effect and bias}\label{sec:estimation}

Sections~\ref{sec:identification} and \ref{sec:quantification} provide closed-form expressions for average partial effect on the treated and the causal bias, namely~\eqref{eq:marginal_causal_effect} and~\eqref{eq:bias_rewrite}. Their numerical estimation reduces to a mere statistical inference task: the approximation of an expectation with respect to the probability density $p(U|x,o)$. We further notice that
\begin{equation}\label{eq:posterior_split}
    p(U|x,o) = p(U_{\cL}|x,o) \prod_{V_i \in \cX \cup \cO}\delta(U_{V_i} - u_{V_i}^*),
\end{equation} 
where $u_{V_i}^*$ are uniquely determined by the assumptions in Theorem~\ref{thm:bias_rewrite} that $f_{V_i}$ are invertible in $U_{V_i}$ for all $V_i \in \cX \cup \cO$ given $U_{\cL}$. Hence, we can restrict our attention to infer $p(U_{\cL}|x,o)$. In some cases, such a posterior might be available in closed-form. Otherwise we may resort to sampling methods such as MCMC~\cite{brooks2011handbook}, SMC~\cite{doucet2001introduction}, and importance sampling~\cite{neal2001annealed}, or variational methods like Laplace approximation~\cite{shun1995laplace}, variational Bayes~\cite{Jordan:1998:vi, blei2017variational}, and normalizing flows~\cite{rezende2015variational, papamakarios2019normalizing}. Let us quickly review Laplace approximation and importance sampling in our framework, as we will use them in Sections~\ref{sec:lesser_evil} and~\ref{sec:ascvd}, respectively.
 
\textbf{Laplace approximation.} The Laplace approximation is a Gaussian approximation of $p(U_{\cL}|x,o)$ around its mode. In order to compute the mode, we first notice that via automatic differentiation we have immediate access to $\nabla_{U_{V_i}}\log p(U_{\cL},x,o)$ for all $U_{V_i} \in \cU_{\cL}$, which we can employ in any gradient-based optimization to compute a maximum-a-posteriori (MAP) estimate~\cite{murphy2012machine}. Let us denote the latter by $u_{\cL}^{\MAP}$. Furthermore, automatic differentiation also gives access to the Hessian with elements (sub-matrices) $\nabla_{U_{V_i}}\nabla_{U_{V_j}}\log p(U_{\cL},x,o)$ or to any scalable approximation of it (e.g.~diagonal approximation). Denote this Hessian evaluated at the MAP estimate by $H(u_{\cL}^{\MAP})$. Whenever $p(U_{\cL}|x,o)$ has a well-defined MAP estimate,~$-H(u_{\cL}^{\MAP})$ is positive-definite and we can approximate the posterior by a (multivariate) Gaussian,
 \begin{equation}\label{eq:laplace_approximation}
 p(U_{\cL}|x,o)\approx \cN\!\left(u_{\cL}^{\MAP}, -H^{-1}(u_{\cL}^{\MAP})\right)\!(U_{\cL}).
 \end{equation}
We can now generate a sample $u_{\cL}^{(i)}$ from the approximated posterior, compute the remaining exogenous variables via the respective inverse maps of $f_{V_i}^{x,o}$, and produce a Monte Carlo estimate of $\cC(x,o)$ and $\cB(x,o)$ using~\eqref{eq:marginal_causal_effect} and~\eqref{eq:bias_rewrite}, respectively.
 
\textbf{Importance sampling.} Importance sampling is a way to modify the density underlying an expectation. In causal inference, it is used for inverse propensity weighting (IPW)~\cite{imbens2000role, hirano2004propensity, imai2004causal, wu2018matching, austin2019assessing}. Consider any treatable density $q(U_{\cL})>0$ of a sampling distribution, and some function $h(u_{\cL}, u_X, u_{\cO})$, which is integrable in all arguments. We have
 \begin{eqnarray}
     \bE_{U|x,o}[h(U_{\cL},\!\!\!\!\!\!\!\!&&\!\!\!\!\!\!\!\! U_X, U_{\cO})] \nonumber \\
     &=&\bE_{U_{\cL}|x,o}\left[h(U_{\cL}, u_X^*, u_{\cO}^*)\right] \nonumber\\
     &=&\bE_{U_{\cL}}\left[h(U_{\cL}, u_X^*, u_{\cO}^*)w(U_{\cL}|x,o)\right], \ \ \ \ \ \ \label{eq:importance_sampling}
 \end{eqnarray}
where $\bE_{U_{\cL}}$ denotes the expectation with respect to density $q(\cU_{\cL})$ and normalized importance weights
\begin{equation} \label{eq:importance_weight}
w(U_{\cL}|x,o) \propto \frac{p(U_{\cL},x,o)}{q(U_{\cL})},
\end{equation}
with $\bE_{U_{\cL}}\left[w(U_{\cL}|x,o)\right]=1$. Then we can first sample $u_{\cL}^{(i)}$ from $q(U_{\cL})$ for $i=1,\ldots,N$, compute the unnormalized weights $\tilde{w}^{(i)} = p(u_{\cL}^{(i)},x,o) \, /\,q(u_{\cL}^{(i)})$, and normalize the importance weights to sum up to $N$. Finally, using these same samples, we can compose a Monte Carlo estimator using~\eqref{eq:importance_sampling}, and estimate $\cC(x,o)$ and $\cB(x,o)$ by applying this technique for $h$ being the respective expressions in~\eqref{eq:marginal_causal_effect} and~\eqref{eq:bias_rewrite}. Notice that if we take $q(U_{\cL})$ to be the prior $p(U_{\cL})$, the unnormalized weights simplify to be the likelihood $p(x,o|u_{\cL}^{(i)})$.

\section{Experiments}\label{sec:experiments}

\begin{figure*}[h!]
\noindent\begin{minipage}{.3\textwidth}
	\scalebox{0.9}{
	\tikz{
		% nodes
		\node[latent] (V1) {$V_1$};%
		\node[obs, right=of V1] (X) {$X$};%
		\node[latent, right=of X, xshift=0.3cm] (Y) {$Y$}; %
		\node[latent,above=of V1, yshift=-0.3cm] (U1) {$U_{V_1}$}; %
		\node[latent,above=of X, yshift=-0.3cm] (UX) {$U_X$}; %
		\node[latent, above=of Y, yshift=-0.3cm] (UY) {$U_Y$}; %
		% edges
		\draw[dashed, ->] (V1) to[out=30,in=140] (Y);
		\draw[dashed, ->] (X) to (Y);
		\draw[dashed, ->] (V1) to (X);
		\draw[dashed, ->] (U1) to (V1);
		\draw[dashed, ->] (UX) to (X);
		\draw[dashed, ->] (UY) to (Y);
	}
	}
\begin{align*}
    V_1 &= U_{V_1}\\
    X &= \alpha V_1 + U_X\\
    Y &= \beta X + \gamma V_1 + U_Y
\end{align*}
\caption{Confounding bias}\label{fig:confounding_model} %#%
\end{minipage}
\hspace{0.3cm}
\noindent\begin{minipage}{.3\textwidth}
	\scalebox{0.9}{
	\tikz{
		% nodes
		\node[obs] (X) {$X$};%
		\node[obs, right=of X, xshift=0.3cm] (V1) {$V_1$}; %
		\node[latent, right=of V1, xshift=0.3cm] (Y) {$Y$}; %
		\node[latent,above=of X, yshift=-0.3cm] (UX) {$U_X$}; %
		\node[latent,above=of V1, yshift=-0.3cm] (U1) {$U_{V_1}$}; %
		\node[latent, above=of Y, yshift=-0.3cm] (UY) {$U_Y$}; %
		% edges
		\draw[dashed, ->] (X) to[out=30,in=140] (Y);
		\draw[dashed, ->] (X) to (V1);
		\draw[dashed, ->] (V1) to (Y);
		\draw[dashed, ->] (UX) to (X);
		\draw[dashed, ->] (U1) to (V1);
		\draw[dashed, ->] (UY) to (Y);
	}	
	}
\begin{align*}
    X &= U_X\\
    V_1 &= \alpha X + U_{V_1}\\
    Y &= \beta X + \gamma V_1 + U_Y
\end{align*}
\caption{Overcontrol bias}\label{fig:overcontrol_model} %#%
\end{minipage}
\hspace{0.5cm}
\noindent\begin{minipage}{.3\textwidth}
	\scalebox{0.9}{
	\tikz{
		% nodes
		\node[obs] (X) {$X$};%
		\node[latent, right=of X, xshift=0.3cm] (Y) {$Y$}; %
		\node[obs, right=of Y, xshift=0.3cm] (V1) {$V_1$}; %
		\node[latent,above=of V1, yshift=-0.3cm] (U1) {$U_{V_1}$}; %
		\node[latent,above=of X, yshift=-0.3cm] (UX) {$U_X$}; %
		\node[latent, above=of Y, yshift=-0.3cm] (UY) {$U_Y$}; %
		% edges
		\draw[dashed, ->] (X) to (Y);
		\draw[dashed, ->] (X) to[out=30,in=140] (V1);
		\draw[dashed, ->] (Y) to (V1);
		\draw[dashed, ->] (U1) to (V1);
		\draw[dashed, ->] (UX) to (X);
		\draw[dashed, ->] (UY) to (Y);
		\draw[dashed, ->] (U1) to (V1);
	}	
	}
\begin{align*}
    X &= U_X\\
    Y & = \alpha X + U_Y\\
    V_1 &= \beta X + \gamma Y + U_{V_1}
\end{align*}
\caption{End.~selection bias}\label{fig:selection_model}%#%
\end{minipage}
\end{figure*}

In this section we show several experiments that validate the theory developed in Section~\ref{sec:identification_quantification_inference}, and illustrate interesting applications. In Section~\ref{sec:lin_biases}, we provide closed-form expressions for all possible graphical structure inducing causal bias, and linear structure equation models. in Section~\ref{sec:lesser_evil}, we show how quantifying the bias can be helpful in situation where full identifiability cannot be achieved due to missing data. Finally, in Section~\ref{sec:ascvd} we study the application of our framework to a simulated medical study.

\subsection{Confounding, overcontrol and selection bias}\label{sec:lin_biases}

First, we study simple linear models exhibiting the three possible types of bias: confounding, overcontrol and (endogenous) selection~\cite{elwert2013graphical}. Studying linear models is useful because we can compute $\cA$, $\cC$ and $\cB$ in closed-form, checking our results and building up intuition. For all models, we will consider $\alpha,\beta,\gamma$ and $\delta$ to be scalar parameters and $U_{V_1}, U_X, U_Y\sim \cN(0,1)$. For detailed proofs of the following results, see Appendix \ref{app:bias_types}.

\textbf{Confounding bias.}
Given the model in Figure \ref{fig:confounding_model}, we can compute 
    \begingroup\makeatletter\def\f@size{9.5}\check@mathfonts
\def\maketag@@@#1{\hbox{\m@th\normalsize\normalfont#1}}
\[
\cA(x,o) \, = \, \beta + \frac{\gamma\alpha}{1+\alpha^2}, \quad
\cC(x,o) \, = \, \beta, \quad
\cB(x,o) \, = \, \frac{\gamma\alpha}{1+\alpha^2}.
\]
\endgroup
Notice that because the model is linear, $\cC(x,o)$ is independent of $p(U|x,o)$. If $\gamma\ne 0$ then $U_{V_1}\in \cU^C$, and if $\alpha\ne 0$ then $U_{V_1}\not\indep X$. Thus \eqref{eq:causal_identification} is violated and $\cB(x,o)\ne 0$.

\textbf{Overcontrol bias.}
Given the model in Figure \ref{fig:overcontrol_model}, we can compute 
    \begingroup\makeatletter\def\f@size{9.5}\check@mathfonts
\def\maketag@@@#1{\hbox{\m@th\normalsize\normalfont#1}}
\[
\cA(x,o) \, = \, \beta,\quad
\cC(x,o) \, = \, \beta + \gamma\alpha, \quad
\cB(x,o) \, = \, -\gamma\alpha.
\]
\endgroup
Again, because the model is linear, $\cC(x,o)$ is independent of $p(U|x,o)$. If $\gamma\ne 0$ then $U_{V_1}\in \cU^C$, and if $\alpha\ne 0$ then $U_{V_1}\not\indep X|V_1$. Thus \eqref{eq:causal_identification} is violated and $\cB(x,o)\ne 0$.

\textbf{Selection bias.}
Given the model in Figure \ref{fig:selection_model}, we can compute 
    \begingroup\makeatletter\def\f@size{8.5}\check@mathfonts
\def\maketag@@@#1{\hbox{\m@th\normalsize\normalfont#1}}
\[
\cA(x,o) \, = \, \frac{\alpha - \gamma\beta}{1 + \gamma^2}, \quad
\cC(x,o) \, = \, \alpha, \quad
\cB(x,o) \, = \, -\frac{\gamma(\beta+\gamma\alpha)}{1+\gamma^2}.
\]
\endgroup
Once more, because the model is linear, $\cC(x,o)$ is independent of $p(U|x,o)$. If $\gamma\ne 0$ and either $\alpha\ne 0$ or $\beta\ne 0$, then $U_Y\not\indep X|V_1$. Since $U_Y\in \cU^C$, this violates \eqref{eq:causal_identification} and $\cB(x,o)\ne 0$.

\subsection{The lesser of two evils: covariate adjustment with missing data}\label{sec:lesser_evil}
In most real applications we cannot adjust for all confounders. For example, take the causal graph in Figure~\ref{fig:lesser_evil}. Whenever we model a mediator $V_2$, there is likely to exist some variable $V_1$ for which we do not have data acting as a confounder between $X$ and $V_2$. Theorem \ref{thm:causal_identification} says that in order to achieve identifiability we would need to observe $V_1$ and not observe $V_2$. However, as we cannot observe $V_1$, we are left with the following question: Should we observe $V_2$ and incur overcontrol bias, or not observe it and have confounding bias?

\vspace{5pt}
\begin{figure}[h!]
\noindent\begin{minipage}{.5\textwidth}
		\begin{center}
\tikz{
	% nodes
	\node[latent] (V1) {$V_1$};%
	\node[obs, right=of V1] (X) {$X$};%
	\node[latent, right=of X, xshift=0.3cm] (V2) {$V_2$}; %
	\node[latent, right=of V2, xshift=0.3cm] (Y) {$Y$}; %
	\node[latent,above=of V1, yshift=-0.3cm] (U1) {$U_{V_1}$}; %
	\node[latent,above=of X, yshift=-0.3cm] (UX) {$U_X$}; %
	\node[latent,above=of V2, yshift=-0.3cm] (UV2) {$U_{V_2}$}; %
	\node[latent, above=of Y, yshift=-0.3cm] (UY) {$U_Y$}; %
	% edges
	\draw[dashed, ->] (V1) to[out=30,in=140] (V2);
	\draw[dashed, ->] (X) to (V2);
	\draw[dashed, ->] (V2) to (Y);
	\draw[dashed, ->] (V1) to (X);
	\draw[dashed, ->] (U1) to (V1);
	\draw[dashed, ->] (UX) to (X);
	\draw[dashed, ->] (UV2) to (V2);
	\draw[dashed, ->] (UY) to (Y);
}	
\end{center}
\end{minipage}
\begin{minipage}{.45\textwidth}
\begin{align*}
V_1 &= U_{V_1}\\
X &= \alpha \exp(V_1) + U_X\\
V_2 &= \beta X + \gamma V_1^2 + U_{V_2}\\
Y &= \delta V_2 + U_Y
\end{align*}
\end{minipage}
\caption{Confounding vs.~overcontrol bias. If $V_1$ cannot be observed, should we observe $V_2$ or not?}\label{fig:lesser_evil}
\end{figure}
\vspace{2pt}

Let us bring this question to the non-linear model in Figure~\ref{fig:lesser_evil}, where we take $\alpha,\beta,\gamma,\delta$ to be scalar parameters and $U_{V_1},U_X,U_{V_2},U_Y\sim \cN(0,1)$. Here, the average partial effect on the treated is given by $\cC(x,o)=\beta\delta$ independently whether we observe $V_2$ or not, therefore it does not help us towards a decision. Rather, we want to compute the absolute marginal causal bias $|\cB(x,o)|$ and pick the case where it is smaller. To this purpose, here we first estimate $p(U_{\cL}|x,o)$ via a Laplace approximation as described in Section~\ref{sec:estimation}, using an ADAM optimizer~\cite{kingma2014adam} to compute the MAP; then we compose $p(U|x,o)$ as in~\eqref{eq:posterior_split}. 

\begin{figure*}[h!]
    \centering
    \includegraphics[scale=0.41]{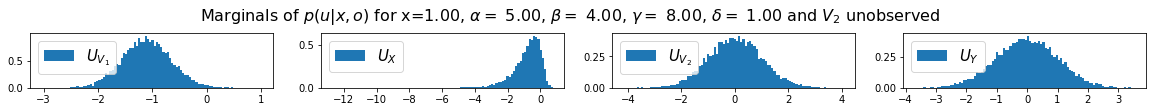}
    \includegraphics[scale=0.41]{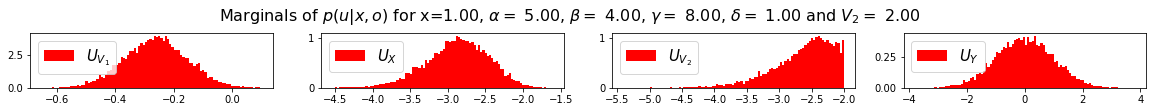}
    \caption{Marginal distributions of $p(U|x,o)$}
    \label{fig:posteriors}
\end{figure*}

\begin{figure*}[h!]
    \centering
    \includegraphics[scale=0.44]{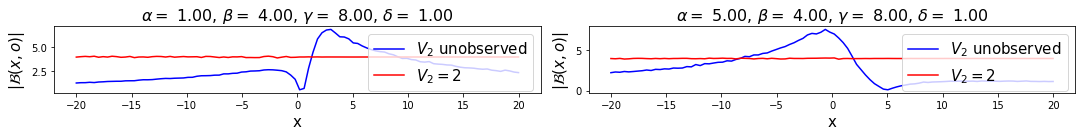}
    \caption{Comparisons of $|\cB(x,o)|$ over $x$, for arbitrary parameters $\alpha,\beta,\gamma,\delta$}
    \label{fig:bias_over_treatment}
\end{figure*}

Figure~\ref{fig:posteriors} shows its marginal distributions for an arbitrary configuration of parameters and for $x=1$, when $V_2$ is unobserved and when $V_2=2$.

Given approximate posterior samples, we can compute a Monte Carlo estimator of the bias in \eqref{eq:bias_rewrite}. Figure~\ref{fig:bias_over_treatment} shows the estimated $|\cB(x,o)|$ both when $V_2$ is unobserved and when $V_2=2$, for two different configurations of parameters $\alpha,\beta,\gamma,\delta$ and for $x\in(-20,20)$. We can see that the estimated $|\cB(x,o)|$ is constant over $x$ when $V_2$ is observed, which is expected since for this model the overcontrol bias equals $\beta\delta$. The confounding bias, however, is non-linear in $x$. Just increasing $\alpha$ from $1$ to $5$, the range of values of $x$ for which observing $V_2$ is better than not observing it changes drastically. As a sanity check, we also make sure that the estimated value of $\cB(x,o)$ is close to $0$ if we could hypothetically observe only $X$ and $V_1$; this turns out in the order of $10^{-13}$, confirming that the estimation procedure is sensible.

In conclusion, as expected the decision depends on specific observations and parameters, potentially learned from data. If we can model missing confounders, feasible bias estimation provides a quantitative tool to take concrete decisions under missing data.

\subsection{A simulated study of statins and atherosclerotic cardiovascular disease}\label{sec:ascvd}
We follow the example in~\cite{zivich2021machine} and conduct a simulation study of statins and subsequent atherosclerotic cardiovascular disease (ASCVD). We adapt the data generation procedure as presented by the authors to satisfy the requirements of Theorem~\ref{thm:bias_rewrite}. In particular, we assume a continuous treatment variable $X$ representing the strength of each prescribed medicine with support in the range $[0, 1]$. This dose intensity of statin is relative to the maximal dose, that is, 80 mg for atorvastatin and 40 mg for rosuvastatin. This better reflects reality as a typical daily dose varies depending on the patient characteristics and other confounders: age $A$, pre-treatment low-density lipoprotein $L$, frailty $F$, diabetes $D$, and ASCVD risk score $R$. We consider another measurement $M$ of low-density lipoprotein that happens post-treatment. The outcome $Y$ indicates the observed incidence of ASCVD. Finally, we record whether a patient had severe headache $H$, which can be an adverse reaction to statin and/or caused by ASCVD. 

We implement the data generation process as a $\theta$-parameterized joint distribution of $X$ and $Y$ as well as covariates $\cV = \{A, L, F, R, D, M, H\}$. The model in use is complex and non-linear; see Appendix~\ref{app:ASCVD}. Figure~\ref{fig:statsmedical} shows some distributions and statistics recovered from the data, which, although the model is not exactly the same, are relatively close to those reported by the authors.

\begin{figure*}[h!]
\centering
\includegraphics[clip, trim=0cm 0.2cm 0cm 0.9cm, scale=0.45]{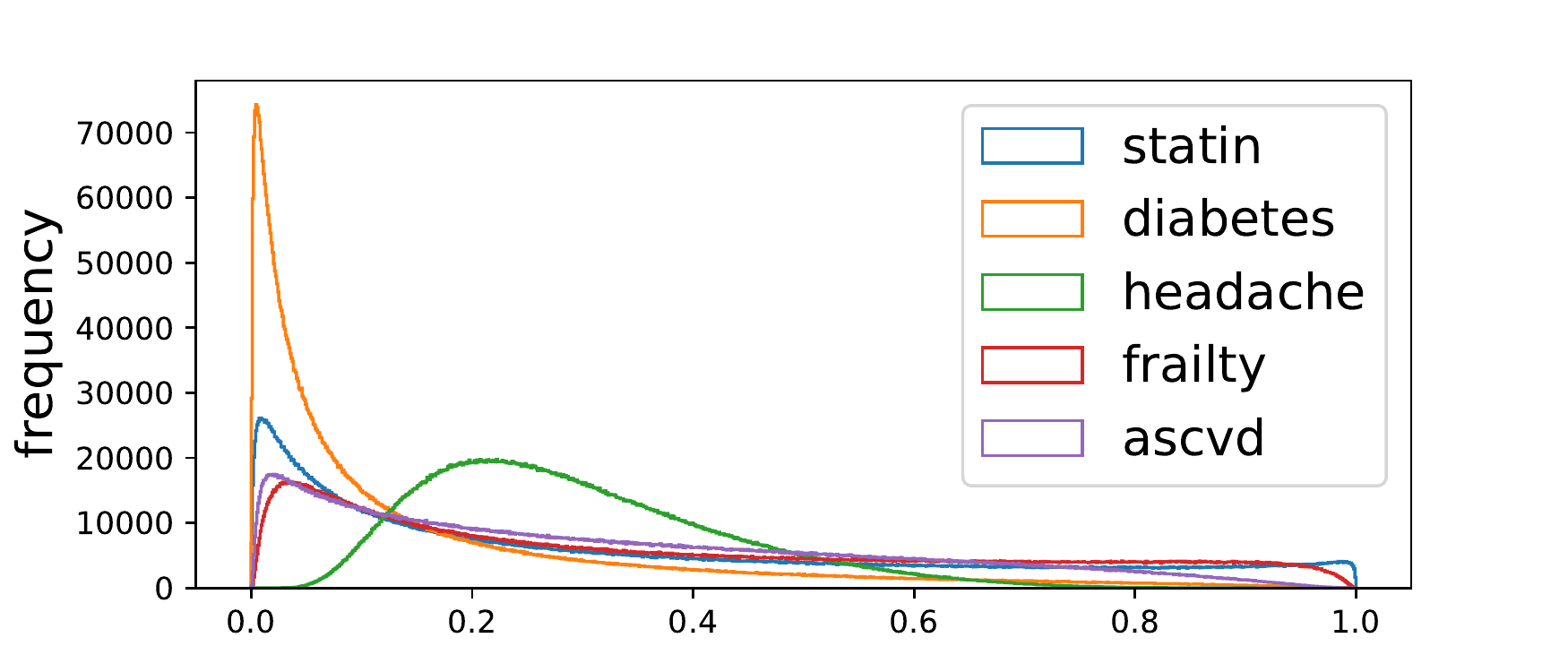} \\
\begin{scriptsize}
\vspace{0.35cm}
\begin{tabular}{l|r|r|r|r|r|r|r|r}
\toprule
&    \textbf{size} & \textbf{age} &  \textbf{diabetes} &  \textbf{headache} &  \textbf{log(pre-LDL)} &  \textbf{log(post-LDL)} &  \textbf{risk score} &     \textbf{ASCVD} \\
\midrule
\textbf{statin}    &    846 &  56.629 &  0.229 &  0.195 &  4.916 &  4.730 &    0.099 &  0.354 \\
\textbf{no statin} &  2154 &  53.092 &  0.020 &  0.042 &  4.860 &  4.878 &    0.076 &  0.301 \\
\bottomrule
\end{tabular}
\end{scriptsize}
\caption{Distributions and statistics of the data generating process} \label{fig:statsmedical}
\end{figure*}
\vspace{-2pt}

\begin{figure*}[h!]
\centering
\includegraphics[scale=0.4]{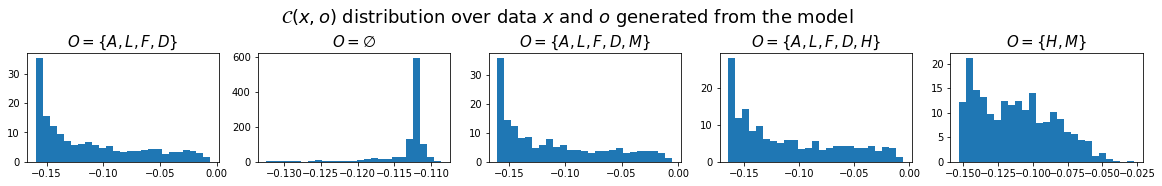} \\
\includegraphics[scale=0.4]{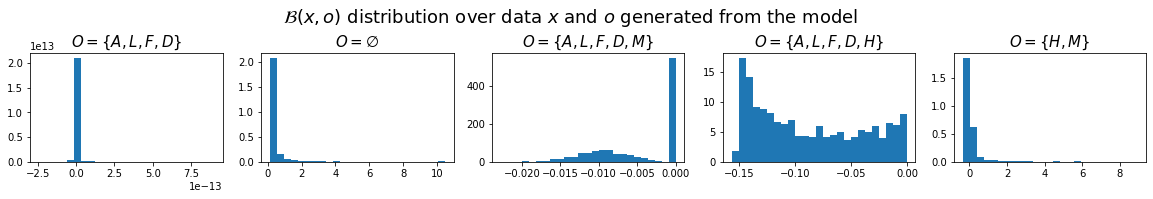} \\
\begin{scriptsize}
\vspace{0.35cm}
\begin{tabular}{l|c|c|c|c|c}
\toprule
& ${\cO=\{A\!,\!L\!,\!F\!,\!D\}}$ & ${\cO=\emptyset}$ &  ${\cO=\{A\!,\!L\!,\!F\!,\!D\!,\!M\}}$ & ${\cO=\{A\!,\!L\!,\!F\!,\!D\!,\!H\}}$ & ${\cO=\{H\!,\!M\}}$ \\
\midrule
\textbf{effect}    &   $-0.112\ \ \ (+0.045)$ & $-0.113\ \ \ (+0.004)$ & $-0.112\ \ \ (+0.045)$ & $-0.112\ \ \ (+0.047)$ & $-0.112\ \ \ (+0.027)$ \\
\textbf{bias}    &   $+0.000\ \ \ (+0.000)$ & $+0.417\ \ \ (+0.895)$ & $-0.005\ \ \ (+0.005)$ & $-0.088\ \ \ (+0.048)$ & $+0.100\ \ \ (+0.608)$ \\
\bottomrule\end{tabular}
\end{scriptsize}
\caption{Distributions and means (standard deviations) of average partial effect on the treated and causal bias}\label{fig:estimatemedical}
\end{figure*}
\vspace{-2pt}

In order to estimate $\cC(x,o)$ and $\cB(x,o)$, here we adopt an importance sampling technique as described in Section \ref{sec:estimation}, choosing the prior $p(U_{\cL})$ as importance density and $N=10^5$. Figure~\ref{fig:estimatemedical} shows distributions of estimated $\cC(x,o)$ and $\cB(x,o)$ with $x$ and $o$ randomly generated from the model, for different sets of observed covariates $\cO$; the table shows their mean estimates, with standard deviation in brackets. When $\cO=\{A,L,F,D\}$ the bias distribution is peaked around zero. This is expected, since all confounders are observed and only them, so the model should exhibit no bias. When $\cO=\emptyset$ we have confounding bias, which we see to be very pronounced. In comparison to the case before, here the distribution of $\cC(x,o)$ is very peaked, showing the large impact of the confounders. This is also the case studied in~\cite{zivich2021machine}, where a true causal effect of $-0.1082$ is reported in the authors' code and is very close to what we estimate for our modified model. When $\cO=\{A,L,F,D, M\}$ we have overcontrol bias, but this does not seem particularly impactful. If $\cO=\{A,L,F,D, H\}$ we have endogenous selection bias, which is smaller than the confounding bias but stronger than the overcontrol one. Finally, if $\cO=\{H,M\}$ all biases interact together. Interestingly, in this case the observation of post-treatment effects such as $H$ and $M$, while themselves contributing to induce bias, partially mitigates the effect of confounding bias. 

\section{Conclusion and future direction}
In this work we developed a new complete criterion for identifiability of causal effects via covariate adjustment. We provided a mathematical relation between association, causal effect, and bias, showing the bias is zero if such criterion holds. We developed a formulation of the bias where all terms can be estimated from data using automatic differentiation. We demonstrated the usefulness of our results over several experiments. A natural future work direction is to exploit the bias characterization as a causal regularizer for the training of highly non-linear machine learning models where we cannot easily control for causal bias.

\bibliography{biblio}
\bibliographystyle{icml2022}

%%%%%%%%%%%%%%%%%%%%%%%%%%%%%%%%%%%%%%%%%%%%%%%%%%%%%%%%%%%%%%%%%%%%%%%%%%%%%%%
%%%%%%%%%%%%%%%%%%%%%%%%%%%%%%%%%%%%%%%%%%%%%%%%%%%%%%%%%%%%%%%%%%%%%%%%%%%%%%%
% APPENDIX
%%%%%%%%%%%%%%%%%%%%%%%%%%%%%%%%%%%%%%%%%%%%%%%%%%%%%%%%%%%%%%%%%%%%%%%%%%%%%%%
%%%%%%%%%%%%%%%%%%%%%%%%%%%%%%%%%%%%%%%%%%%%%%%%%%%%%%%%%%%%%%%%%%%%%%%%%%%%%%%
\newpage
\appendix
\onecolumn

\section{Proofs}\label{app:proofs}
\subsection{Proof of Theorem \ref{thm:causal_identification} \label{app:causal_identification}}
Let us first report the adjustment criterion introduced in \cite{shpitser2012validity}, adapted to our framework where, for sake of simplicity, we required $\cX$ and $\cY$ to be singletons. We stress that, unlike the definition in \cite{shpitser2012validity}, by convention in this work we defined parents, ancestors and descendants of a variable not to include the variable itself (see \cite{lauritzen2001causal}). 
\begin{definition}
    Given a DAG $\cG$, a set of nodes $\cO\in \cV\setminus\{X,Y\}$ satisfies the adjustment criterion relative to $(X,Y)$ in $\cG$ if:
    \begin{itemize}
        \item $\cO$ does not include descendants of any variable laying on a causal path from $X$ to $Y$, that is a path from $X$ to $Y$ where all arrows point away from $X$;
        \item $\cO$ blocks all non-causal paths from $X$ to $Y$ in $\cG$.
    \end{itemize}
\end{definition}
The adjustment criterion was shown in \cite{shpitser2012validity} to be complete for identifiability of causal effects via covariate adjustment. We now show that such criterion and \eqref{eq:causal_identification} are equivalent. In the proof, we will use graphical concepts such as blocked/unblocked path, fork and collider; see \cite{pearl2009causality} for their definitions.

Let us proceed by contradiction. Suppose there exists $O_j\in \cO\cap\De(V_i)\cup \{V_i\}$ with $V_i$ laying on a causal path from $X$ to $Y$. Without loss of generality, suppose that a causal path from $X$ to $V_i$ is unblocked, otherwise we could repeat the argument for the first variable that blocks the path instead. Analogously, without loss of generality suppose that a causal path form $V_i$ to $O_j$ is unblocked. Then we have $U_{V_i}\not\indep X|\cO$. Since $U_{V_i}\in \cU^C$, this contradicts \eqref{eq:causal_identification}.

Now suppose instead there exists an unblocked non-causal path from $X$ to $Y$. If the last arrow of such path points away from $Y$, then $U_Y\not\indep X|\cO$. Since $U_Y\in \cU^C$, this contradicts \eqref{eq:causal_identification}. Vice versa, if the last arrow of such path points into $Y$, in order to be unblocked and non-causal the path must contain a fork $V_i\not\in \cO$ such that $V_i\in\An(Y)$. Then $U_{V_i}\not\indep X|\cO$ and $U_{V_i}\in \cU^C$, contradicting \eqref{eq:causal_identification}.

This completes the proof that \eqref{eq:causal_identification} implies the adjustment criterion. Let us now show the other direction. By contradiction, suppose \eqref{eq:causal_identification} does not hold. Then there exists a $U_{V_i}\in \cU\cap\An^+(Y)$ such that $U_{V_i}\not\indep Y|X,\cO\setminus\De(X)$ and $U_{V_i}\not\indep X|\cO$. Note that $V_i\ne X$ because $U_X\indep Y|X,\cO\setminus\De(X)$. Also, we must have $V_i\in\An(Y)$ or $V_i=Y$. 

Suppose that $V_i=Y$. Then there exists some observed variable $O_j\in\De(Y)$ such that a path from $X$ to $O_j$ is unblocked, otherwise $U_Y\indep X|\cO$. Consequently, there exists an unblocked path from $X$ to $Y$. If the latter is non-causal, this contradicts the second item of the adjustment criterion. Otherwise, since also the path from $Y$ to $O_j$ is causal by definition of descendant, we have a contradiction with the first item instead.

Now suppose instead that $V_i\in\An(Y)\setminus \cX$ and $V_i\in \cO$. Since $U_{V_i}\not\indep X|\cO$ there exists an unblocked path from $X$ to $V_i$ with last arrow pointing into $V_i$. If, in addition, $V_i\in \cO\setminus \De(X)$, then there exists an unblocked path from $V_i$ to $Y$ with first arrow into $V_i$, otherwise $U_{V_i}\indep Y|X,\cO\setminus\De(X)$. Then there is an unblocked non-causal path from $X$ to $Y$ with observed collider $V_i$, which contradicts the second item of the adjustment criterion. If, vice versa, $V_i\in \cO\cap\De(X)$, then there is a causal path from $X$ to $V_i$ and one from $V_i$ to $Y$ with $V_i$ observed, which contradicts the first item instead.

On the other hand, suppose $V_i\in\An(Y)\setminus \cX$ and $V_i\not\in \cO$. Since $U_{V_i}\not\indep X|\cO$ there exists an unblocked path from $X$ to $V_i$ with last arrow pointing away from $V_i$. If a causal path from $V_i$ to $Y$ is unblocked, then there is an unblocked non-causal path from $X$ to $Y$ with unobserved fork $V_i$, which contradicts the second item of the adjustment criterion. If a causal path from $V_i$ to $Y$ is blocked by some variable in $\cO\cap \De(X)$, then there is a blocked causal path from $X$ to $Y$, which contradicts the first item. Otherwise, since $U_{V_i}\in \cU^C$ there must be a non-causal path from $V_i$ to $Y$, with first arrow pointing away from $V_i$, that is unblocked when observing variables in $\cO\setminus\De(X)$. If this path is also unblocked by $\cO$, then there is a contradiction with the second item of the adjustment criterion. Otherwise, take the last variable $O_j\in \cO\cap\De(X)$ that is blocking the path. This can only happen if $O_j\in\An(Y)$, because otherwise there would need to be $O_k\in \cO\cap\De(O_j)$ that is either an observed collider or an observed descendant of a collider, which would make the path from $V_i$ to $Y$ blocked when observing variables in $\cO\setminus\De(X)$. Hence we have a contradiction with the first item of the adjustment criterion and conclude the proof.

\subsection{Proof of Proposition \ref{thm:marginal_decomposition}}\label{app:marginal_decomposition}
Conditioned over $X=x$ and $O=o$, we have $Y=f_Y=f_Y^{x,\underline{o}}$, where $=$ stands for equality in distribution. By Definition \ref{def:causal_set}, the causal exogenous set $\cU^C$ uniquely determines $f_Y^{x,\underline{o}}$, as it includes all exogenous random variables that are ancestors of $Y$ and are not independent of $Y$ given $x$ and $\underline{o}$. Then by the Law of the Unconscious Statistician we can write
\[ \bE_{Y|x,o}[Y] = \int y\,p(y|x,o)\,dy = \int f_Y^{x,\underline{o}}\,p(u^c|x,o)\,du^c. \]
Next, as we assumed $f_Y$ differentiable with respect to $X$ and satisfying sufficient regularity assumptions, by integral Leibniz rule we have
\begin{align*}
     \nabla_x \bE_{Y|x,o}[Y] &= \int \nabla_x\left(f_Y^{x,\underline{o}}\,p(u^c|x,o)\right)\,du^c\\
     &=\int \left(\nabla_xf_Y^{x,\underline{o}} + f_Y^{x,\underline{o}}\,\nabla_x\log p(u^c|x,o)\right)p(u^c|x,o)\,du^c\\
     &= \bE_{U^C|x,o}[\nabla_x\,f_Y^{x,\underline{o}}] + \bE_{U^C|x,o}[f_Y^{x,\underline{o}}\,\nabla_x\log p(U^C|x,o)],
\end{align*}
where we used product rule and the fact that $\nabla_x p(u^c|x,o) = p(u^c|x,o)\nabla_x \log p(u^c|x,o)$. Finally, if the causal effect of $X$ on $Y$ can be identified via covariate adjustment then Corollary \ref{cor:grad_p_zero} holds, which trivially implies $\cB(x, o)=0$.

\subsection{Proof of Theorem \ref{thm:bias_as_covariance}}\label{app:bias_as_covariance}
First, we observe that $\nabla_x \log p(U^C|x,o) = \nabla_x \log p(U^C,x,o) - \nabla_x\log p(x,o)$. Then we have
\begin{align*}
    \nabla_x \log p(x,o) &= \frac{1}{p(x,o)}\nabla_x p(x,o)\\
    &= \frac{1}{p(x,o)}\nabla_x \int p(u^c,x,o)\,dc\\
    &= \frac{1}{p(x,o)} \int \nabla_x p(u^c,x,o)\,dc\\
    &= \frac{1}{p(x,o)} \int \nabla_x \log p(u^c,x,o) p(u^c,x,o)\,dc\\
    &= \int \nabla_x \log p(u^c,x,o) p(u^c|x,o)\,dc\\
    &= \bE_{U^C|x,o}[\nabla_x \log p(U^C,x,o)].
\end{align*}
Thus the bias can be rewritten as
\begin{align*}
    \cB(x,o) &= \bE_{U^C|x,o}[f_Y^{x,\underline{o}}\, \nabla_x\log p(U^C|x,o)]\\
    &= \bE_{U^C|x,o}[f_Y^{x,\underline{o}}\, \nabla_x\log p(U^C,x,o)] - \bE_{U^C|x,o}[f_Y^{x,\underline{o}}\, \nabla_x\log p(x,o)]\\
    &= \bE_{U^C|x,o}[f_Y^{x,\underline{o}}\, \nabla_x\log p(U^C,x,o)] - \bE_{U^C|x,o}[f_Y^{x,\underline{o}}]\,\nabla_x\log p(x,o)\\
    &= \bE_{U^C|x,o}[f_Y^{x,\underline{o}}\, \nabla_x\log p(U^C,x,o)] - \bE_{U^C|x,o}[f_Y^{x,\underline{o}}]\,\bE_{C|x,o}[\nabla_x \log p(U^C,x,o)]\\
    &= \Cov_{U^C|x,o}\left(f_Y^{x,\underline{o}}, \nabla_x\log p(U^C,x,o)\right).
\end{align*}

\subsection{Proof of Theorem \ref{thm:bias_rewrite}}\label{app:bias_rewrite}

Let us start by the expression of the bias in \eqref{eq:bias_as_covariance}. We first note that $\bar{f}_Y^{x,\underline{o}}=\int \bar{f}_Y^x p(u_{\underline{O}}|u^c,\underline{o})\,du_{\underline{O}}$, where $p(u_{\underline{O}}|u^c,\underline{o})$ is a Dirac delta distribution. In particular the latter does not depend on $x$ by definition of $\underline{\c O}$. By further marginalization, this also implies that $\bE_{U^C|x,o}[f_Y^{x,\underline{o}}]=\bE_{U|x,o}[f_Y^x]$. For sake of notation, we will denote $\bar{f}_Y^{x,\underline{o}}=f_Y^{x,\underline{o}} - \bE_{U^C|x,o}[f_Y^{x,\underline{o}}]$ and $\bar{f}_Y^x=f_Y^x - \bE_{U|x,o}[f_Y^x]$. Also, denote $u=[u^c,u_{\underline{O}}, \tilde{u}]$, where $u^c$ is a realization of the causal exogenous set and $\tilde{u}$ contains all variables in $u$ that are neither in $u^c$ nor in $u_{\underline{O}}$. We have
\begin{align*}
    \cB(x,o) &= \Cov_{U^C|x,o}(f_Y^{x,\underline{o}}, \nabla_x \log p(u^c,x,o))\\
    &= \bE_{U^C|x,o}[\bar{f}_Y^{x,\underline{o}}\,\nabla_x\log p(u^c,x,o)]\\
    &= \frac{1}{ p(x,o)}\int \bar{f}_Y^{x,\underline{o}}\,\nabla_x p(u^c,x,o)\,du6c\\
    &= \frac{1}{ p(x,o)}\int \left(\int\bar{f}_Y^x\,p(u_{\underline{O}}|u^c,\underline{o})\,du_{\underline{O}}\right)\nabla_x \left(\int  p(u^c,\tilde{u},x,o)\,d\tilde{u}\right)\,du^c\\ 
    &= \frac{1}{ p(x,o)}\int \bar{f}_Y^x\nabla_x (p(u_{\underline{O}}|u^c,\underline{o})\,p(u^c,\tilde{u},x,o))\,du\\
    &= \frac{1}{ p(x,o)}\int \bar{f}_Y^x\nabla_x p(u,x,o)\,du
\end{align*}
where (i) we marginalized $p(u^c,x,o)$ over variables $\tilde{u}$; (ii) we moved the integrals outside, which is possible because $\bar{f}_Y^x$ does not depend on $\tilde{u}$ by definition of $\cU^C$; (iii) we moved $p(u_{\underline{O}}|u^c,\underline{o})$ inside the gradient with respect to $x$, which is allowed since it is independent of it; (iv) we rewrote the density factorization as $p(u,x,o)$, that is correct because $p(u_{\underline{O}}|u^c,\underline{o})=p(u_{\underline{O}}|u^c,x,o, \tilde{u})$ by definition of $\underline{\cO}$. 

Again for sake of notation, let us denote $z=[x,o]$. We indicate by $z_i$ a generic variable $x$ or $o_i$, whereas $z_{-i}$ corresponds to all variables in $z$ except $z_i$. Without loss of generality, we further order the indices $i$ from parents to children, so that a variable $z_i$ may depend on $z_{i-1}$ but not on $z_{i+1}$. We write $z_{<i}=[z_1,\dots,z_{i-1}]$. Then we can factorize
\[ p(u,z) = p(u)\prod_ip(z_i|z_{<i}, u) = p(u)\prod_i\delta(g_{z_i}), \]
where $\delta$ denotes a Dirac delta distribution and we define $g_{z_i}=f_{Z_i}^{x,o}-z_i$. Replacing in the derivation above, we have
\begin{align*}
\frac{1}{ p(x,o)}\int \bar{f}_Y^x\,\nabla_x p(u,x,o)\,du &= \frac{1}{ p(x,o)}\sum_i\int \bar{f}_Y^x\, p(u)p(z_{-i}|u)\nabla_x \delta(g_{z_i})\,du\\
&= \frac{1}{ p(x,o)}\sum_i\int \bar{f}_Y^x\, p(u)p(z_{-i}|u)\nabla_{g_{z_i}} \delta(g_{z_i})\nabla_x g_{z_i}\,du,\\
\end{align*}
where in the last step we used chain rule. Let us now focus on each term of the sum. We decompose $u=[u_{Z_i}, u_{-{Z_i}}]$, where $u_{-{Z_i}}$ contains all variables in $u$ but $u_{Z_i}$. Observe that $p(z_{-i}|u) = p(z_{-i}|u_{-Z_i})$ because $z_{-i}$ is independent of $u_{Z_i}$. We then proceed with a change of variable, i.e.~$p(u_{Z_i})\,du_{Z_i}=p(g_{z_i})\,dg_{z_i}$. Replacing above, we get
\begin{align*} 
&\int \bar{f}_Y^x\, p(u)p(z_{-i}|u)\nabla_{g_{z_i}} \delta(g_{z_i})\nabla_x g_{z_i}\,du\\ 
&=\int \bar{f}_Y^x\, p(u_{Z_i})p(u_{-Z_i})p(z_{-i}|u_{-Z_i})\nabla_{g_{z_i}} \delta(g_{z_i})\nabla_x g_{z_i}\,du_{Z_i}du_{-Z_i}\\ 
&=\int \bar{f}_Y^x\, p(g_{z_i})p(u_{-Z_i})p(z_{-i}|u_{-Z_i})\nabla_{g_{z_i}} \delta(g_{z_i})\nabla_x g_{z_i}\,dg_{z_i}du_{-Z_i}\\ 
&=-\int \nabla_{g_{z_i}}\left(\bar{f}_Y^x\, p(g_{z_i})\nabla_x g_{z_i}p(u_{-Z_i})p(z_{-i}|u_{-Z_i})\right) \delta(g_{z_i})\,dg_{z_i}du_{-Z_i},\\ 
\end{align*}
where in the last step we used the definition of derivative of a Dirac delta, i.e.~$\int h(a)\nabla_a\delta(a)\,da=-\int \nabla_a h(a)\delta(a)\,da$, where $h$ is a smooth function. This is possible because we assumed $f_Y$ to be differentiable in $X$ and $V_i \in \cO$. Note that $\nabla_{g_{z_i}}\nabla_x g_{z_i}=\nabla_x\nabla_{g_{z_i}}g_{z_i}=0$; furthermore, $\nabla_{g_{z_i}}p(u_{-Z_i})=0$ and $\nabla_{g_{z_i}}p(z_{-i}|u_{-Z_i})=0$ because they do not depend on $u_{Z_i}$. Also, notice that
\[ \nabla_{g_{z_i}}p(g_{z_i}) = p(g_{z_i})\nabla_{g_{z_i}}\log p(g_{z_i}) = p(g_{z_i})\left(\nabla_{g_{z_i}}\log p(u_{Z_i}) - \nabla_{g_{z_i}}\log|\det(\nabla_{u_{Z_i}}g_{z_i})|\right). \]
We can always express $\det(\nabla_{u_{Z_i}} g_{z_i}) = \sum_\rho\sign(\rho)\prod_k \partial_{u_{Z_i}}[g_{z_i}]_{\rho(k)}$, where $\rho$ is a permutation over the indices, $\sign(\rho)\in\{+1,-1\}$ is its sign and $[g_{z_i}]_{\rho(k)}$ denotes the ${\rho(k)}$-component of $g_{z_i}$. Because $\partial_{u_{Z_i}}\nabla_{g_{z_i}}[g_{z_i}]_{ \rho(i)}=0$, we conclude that $\nabla_{g_{z_i}}\log|\det(\nabla_{u_{Z_i}} g_{z_i})|) = 0$ and $\nabla_{g_{z_i}}p(g_{z_i})=p(g_{z_i})\nabla_{g_{z_i}}\log p(u_{Z_i})$. By replacing in the derivation above, we have
\begin{align*}
    &-\int \nabla_{g_{z_i}}\left(\bar{f}_Y^x\, p(g_{z_i})\nabla_x g_{z_i}\,p(u_{-Z_i})p(z_{-i}|u_{-Z_i})\right) \delta(g_{z_i})\,dg_{z_i}du_{-Z_i}\\
    &=-\int \left(\nabla_{g_{z_i}}\bar{f}_Y^x + \bar{f}_Y^x\,\nabla_{g_{z_i}}\log p(u_{Z_i})\right)\nabla_x g_{z_i}\,p(g_{z_i})p(u_{-Z_i})p(z_{-i}|u_{-Z_i}) \delta(g_{z_i})\,dg_{z_i}du_{-Z_i}\\
    &=-\int \left(\nabla_{g_{z_i}}\bar{f}_Y^x + \bar{f}_Y^x\,\nabla_{g_{z_i}}\log p(u_{Z_i})\right)\nabla_x g_{z_i}\,p(u_{Z_i})p(u_{-Z_i})p(z_{-i}|u_{-Z_i}) \delta(g_{z_i})\,du\\
    &=-\int \left(\nabla_{g_{z_i}}\bar{f}_Y^x + \bar{f}_Y^x\,\nabla_{g_{z_i}}\log p(u_{Z_i})\right)\nabla_x g_{z_i}\,p(u, z)\,du,\\
\end{align*}
where in the last two steps we undid the change of variables and reassembled the joint density. We now sum together all terms over $i$, scale by $p(z)$ and remember that we defined $z=[x,o]$, $\bar{f}_Y^x=f_Y^x-\bE_{U|x,o}[f_Y^x]$ and $g_{v_i}=f_{V_i}^{x,o}-v_i$ for $V_i\in \cX\cup \cO$. We have
\begin{align*}
    \cB(x,o)&=-\frac{1}{ p(z)}\sum_i\int \left(\nabla_{g_{z_i}}\bar{f}_Y^x + \bar{f}_Y^x\,\nabla_{g_{z_i}}\log p(u_{Z_i})\right)\nabla_x g_{z_i}\,p(u, z)\,du\\
    &=-\sum_i\int \left(\nabla_{g_{z_i}}\bar{f}_Y^x + \bar{f}_Y^x\,\nabla_{g_{z_i}}\log p(u_{Z_i})\right)\nabla_x g_{z_i}\,p(u| z)\,du\\  
    &=-\sum_{V_i\in X\cup O}\int \left(\nabla_{g_{v_i}}\bar{f}_Y^x + \bar{f}_Y^x\,\nabla_{g_{v_i}}\log p(u_{V_i})\right)\nabla_x g_{v_i}\,p(u| x,o)\,du\\    
    &=-\sum_{V_i\in X\cup O}\bE_{U|x,o}\left[\left(\nabla_{f_{V_i}^{x,o}}f_Y^x + \left(f_Y^x-\bE_{U|x,o}[f_Y^x]\right)\nabla_{f_{V_i}^{x,o}}\log p(u_{V_i})\right)\nabla_x (f_{V_i}^{x,o}-v_i)\right],
\end{align*}
where we named $V_i$ to be a generic variable in $\cX\cup \cO$ and we expressed the integral as an expectation under the probability density $p(u|x,o)$. We can finally observe that for a generic function $h$ differentiable in $f_{V_i}^{x,o}$, by chain rule we have $\nabla_{u_{V_i}}h = \nabla_{f_{V_i}^{x,o}}h \nabla_{u_{V_i}}f_{V_i}^{x,o}$. Because $f_{V_i}^{x,o}$ is invertible with respect to $U_{V_i}$ for $V_i\in \cX\cup \cO$, the Jacobian $\nabla_{u_{V_i}}f_{V_i}^{x,o}$ is also invertible and we can write $\nabla_{f_{V_i}^{x,o}}h = \nabla_{u_{V_i}}h\left(\nabla_{u_{V_i}}f_{V_i}^{x,o}\right)^{-1}$. Replacing in the derivation above, we conclude that
    \begingroup\makeatletter\def\f@size{9.5}\check@mathfonts
    \def\maketag@@@#1{\hbod{\m@th\normalsize\normalfont#1}}
    \[
    \cB(x,o) = -\!\!\!\sum_{V_i\in \cX\cup \cO}\!\!\bE_{U|x,o}\Big[\Big(\nabla_{u_{V_i}} f_Y^x + \left(f_Y^x-\bE_{U|x,o}[f_Y^x]\right)\nabla_{u_{V_i}}\!\log p(U_{V_i}) \Big)(\nabla_{u_{V_i}}f_{V_i}^{x,o})^{-1}\, \nabla_x (f_{V_i}^{x,o}  - v_i)\Big].
    \]
    \endgroup

\section{Confounding, overcontrol and endogenous selection bias}\label{app:bias_types}
\subsection{Confounding bias}\label{app:confounding}
Let us consider the model in Figure \ref{fig:confounding_model}. Following \eqref{eq:marginal_causal_effect}, the average partial effect on the treated can be expressed as
\[ \cC(x,o)=\bE_{U|x,o}[\nabla_x f_Y^x] = \bE_{U|x,o}[\beta]=\beta. \]
For the marginal causal bias, we have $\nabla_{u_X}f_Y^{x,o}=0$, $\nabla_{u_X}\log p(U_X) = -U_X$, $\nabla_{u_X}f_X^{x,o}=1$ and $\nabla_x(f_X^{x,o} - x) = -1$. Then \eqref{eq:bias_rewrite} gives
\[
    \cB(x,o) = -\bE_{U|x,o}[\left(f_Y^x-\bE_{U|x,o}[f_Y^x]\right)U_X].
\]
Notice that $p(u_{V_1},u_X,u_Y|x)=p(u_{V_1},u_X|x)p(u_Y)$, which implies that $U_Y|X=x$ and $U_X|X=x$ are independent. This yields
\begin{align*} 
&-\bE_{U|x,o}[\left(f_Y^x-\bE_{U|x,o}[f_Y^x]\right)U_X]\\
&=-\bE_{U|x,o}[(\beta x + U_Y - \bE_{U|x,o}[\beta x + U_Y])U_X] - \bE_{U|x,o}[(\gamma U_{V_1} - \bE_{U|x,o}[\gamma U_{V_1}])U_X] \\
&= -\bE_{U|x,o}[\beta x + U_Y - \bE_{U|x,o}[\beta x + U_Y]]\,\bE_{U|x,o}[U_X] - \gamma\,\bE_{U|x,o}[(U_{V_1} - \bE_{U|x,o}[U_{V_1}])U_X]\\
&=-\gamma\,\bE_{U|x,o}[(U_{V_1} - \bE_{U|x,o}[U_{V_1}])U_X]\\
&=-\gamma\,\Cov_{U|x,o}(U_{V_1}, U_X),
\end{align*}
since $\bE_{U|x,o}[\beta x + U_Y - \bE_{U|x,o}[\beta x + U_Y]]=0$. With some simple calculation, one can show that $p(u_X|x)=N(\tfrac{x}{1+\alpha^2}, \frac{\alpha^2}{1+\alpha^2})(u_X)$ and $p(u_{V_1}|x) = N(\tfrac{\alpha x}{1 + \alpha^2}, \tfrac{1}{1+\alpha^2})(u_{V_1})$, where $N(\mu,\sigma^2)(a)$ generally denotes a Gaussian probability density with mean $\mu$, variance $\sigma^2$, evaluated at $a$. We then have that $\bE_{U|x,o}[U_X]=\tfrac{x}{1+\alpha^2}$, $\bE_{U|x,o}[U_{V_1}]=\tfrac{\alpha x}{1 + \alpha^2}$ and $\Var_{U|x,o}(U_{V_1})=\tfrac{1}{1+\alpha^2}$. Furthermore, we have
\begin{align*}
\bE_{U|x,o}[U_{V_1} U_X] &= \int u_{V_1} u_X p(u_{V_1},u_X|x)\,du_X\,du_{V_1}\\
&=\int u_{V_1} \int u_X p(u_X|u_{V_1},x)\,du_X\, p(u_{V_1}|x)\,du_{V_1}\\
&=\int u_{V_1} \int u_X \delta(u_X+\alpha u_{V_1} - x)\,du_X\, p(u_{V_1}|x)\,du_{V_1}\\
&=\int u_{V_1} (x - \alpha u_{V_1})\,p(u_{V_1}|x)du_{V_1}\\
&= x\bE_{U|x,o}[U_{V_1}] - \alpha \bE_{U|x,o}[U_{V_1}^2]\\
&=x\bE_{U|x,o}[U_{V_1}] - \alpha (\Var_{U|x,o}(U_{V_1}) + \bE_{U|x,o}[U_{V_1}]^2).
\end{align*}
Then we have all the elements to compute
\[ \cB(x,o)=-\gamma\left(\bE_{U|x,o}[U_{V_1} U_X] - \bE_{U|x,o}[U_{V_1}]\bE_{U|x,o}[U_X]\right) = \frac{\gamma\alpha}{1+\alpha^2}. \]
Thus
\[ \cC(x,o) + \cB(x,o) = \beta + \frac{\gamma\alpha}{1+\alpha^2}.\]
Let us check that the latter matches the association $\cA(x,o)$ when computed directly. We have
\begin{align*}
    \bE_{Y|x,o}[Y] &= \beta x + \gamma\bE_{V_1|x,o}[V_1] + \bE_{U|x,o}[U_Y]\\
    &= \beta x + \gamma\bE_{U|x,o}[U_{V_1}]\\
    &= \beta x + \frac{\gamma\alpha}{1+\alpha^2}x.
\end{align*}
Then $\cA(x,o)=\nabla_x \bE_{Y|x,o}[Y] = \beta + \tfrac{\gamma\alpha}{1+\alpha^2}$, which verifies the statement.

\subsection{Overcontrol bias}\label{app:overcontrol}
Let us consider the model in Figure \ref{fig:overcontrol_model}. Following \eqref{eq:marginal_causal_effect}, we have
\[ \cC(x,o)= \bE_{U|x,o}[\beta + \gamma\alpha] = \beta + \gamma\alpha. \]
For the bias, we have $\nabla_{u_X}f_Y^{x,o}=0$, $\nabla_{u_X}\log p(U_X)=-U_X$, $\nabla_{u_X}f_X^{x,o}=1$ and $\nabla_x (f_X^{x,o}-x)=-1$. Then the contribution given by $X$ to the bias is null, since
\[
    \bE_{U|x,o}[\left(f_Y^x-\bE_{U|x,o}[f_Y^x]\right)U_X]=\bE_{U|x,o}[f_Y^x-\bE_{U|x,o}[f_Y^x]]\bE_{U|x,o}[U_X]= 0,
\]
where we used that $f_Y^x$ is independent of $U_X$ given $X=x$.

Regarding the contribution to the bias given by $V_1$, we have $\nabla_{u_{V_1}}f_Y^x=\gamma$, $\nabla_x f_{V_1}^{x,o}=\alpha$ and $\nabla_{u_{V_1}}f_{V_1}^{x,o}=1$. Moreover, $U_Y|V_1$ is independent of $X$. Then
\begin{align*}
    \cB(x,o)
    &= -\gamma\alpha + \alpha\bE_{U|x,o}[\left(f_Y^x-\bE_{U|x,o}[f_Y^x]\right)U_{V_1}]\\
    &= -\gamma\alpha + \alpha\bE_{U|x,o}[\beta x + \gamma v_1 + U_Y - \bE_{U|x,o}[\beta x + \gamma v_1 + U_Y]]\bE_{U|x,o}[U_{V_1}]\\
    &= -\gamma\alpha
\end{align*}
since $\bE_{U|x,o}[\beta x + \gamma v_1 + U_Y - \bE_{U|x,o}[\beta x + \gamma v_1 + U_Y]]=0$. Then
\[ \cC(x,o)+\cB(x,o) = \beta + \gamma\alpha -\gamma\alpha = \beta. \]
Let us check the result by computing the marginal association directly. We have $p(y|x,v_1)=N(\beta x + \gamma v_1, 1)(y)$, whence $\bE_{Y|x,o}[Y]=\beta x + \gamma v_1$ and $\cA(x,o)=\nabla_x \bE_{Y|x,o}[Y] = \beta$.

\subsection{Endogenous selection bias}\label{app:endogenous_selection}
Let us consider the model in Figure \ref{fig:selection_model}. Following \eqref{eq:marginal_causal_effect}, we have
\[ \cC(x,o)=\bE_{U|x,o}[\nabla_x f_Y^x] = \bE_{U|x,o}[\alpha] = \alpha. \]
For the bias, we can argue that the contribution of $X$ is null for exactly the same reason as in the overcontrol bias case (see Appendix \ref{app:overcontrol}). Regarding the contribution of $V_1$, we have $\nabla_{u_{V_1}}f_Y^x=0$, $\nabla_{u_{V_1}}\log p(U_{V_1})=-U_{V_1}$, $\nabla_{u_{V_1}}f_{V_1}^{x,o}=1$ and $\nabla_x (f_{V_1}^{x,o}-{v_1})=\beta + \gamma\alpha$. Then
\begin{align*}
    \cB(x,o) &= (\beta + \gamma\alpha)\bE_{U|x,o}[\left(f_Y^x-\bE_{U|x,o}[f_Y^x]\right)U_{V_1}]\\
    &= (\beta + \gamma\alpha)\bE_{U|x,o}[\left(U_Y-\bE_{U|x,o}[U_Y]\right)U_{V_1}]\\
    &= (\beta + \gamma\alpha)\Cov_{U|x,o}(U_Y, U_{V_1}).
\end{align*}
One can compute $p(u_Y|x,v_1)=N(\tfrac{\gamma}{1+\gamma^2}(v_1-(\beta+\gamma\alpha)x),\tfrac{1}{1+\gamma^2})(u_Y)$ and $p(u_{V_1}|x,v_1)=N(\tfrac{1}{1+\gamma^2}(v_1-(\beta + \gamma\alpha)x), \tfrac{\gamma^2}{1+\gamma^2})(u_{V_1})$, which respectively imply $\bE_{U|x,o}[U_Y]=\tfrac{\gamma}{1+\gamma^2}(v_1-\beta-\gamma\alpha)x$,  $\bE_{U|x,o}[U_{V_1}]=\tfrac{x}{1+\gamma^2}(v_1-\beta - \gamma\alpha)$ and $\Var_{U|x,o}(U_{V_1})=\tfrac{\gamma^2}{1+\gamma^2}$. If $\gamma=0$, then $\bE_{U|x,o}[U_{V_1} U_Y]=0$. Otherwise
\begin{align*}
\bE_{U|x,o}[U_{V_1} U_Y] &= \int u_{V_1} u_Y p(u_{V_1},u_Y|x,v_1)\,du_Y\,du_{V_1}\\
&=\int u_{V_1} \int u_Y p(u_Y|u_{V_1},x,v_1)\,du_Y\, p(u_{V_1}|x,v_1)\,du_{V_1}\\
&=\int u_{V_1} \int u_Y \delta((\beta + \gamma\alpha)x +\gamma u_Y + u_{V_1} - v_1)\,du_Y\, p(u_{V_1}|x,v_1)\,du_{V_1}\\
&=\int u_{V_1} \left(\frac{1}{\gamma}(v_1 - (\beta+\gamma\alpha)x) - \frac{1}{\gamma} u_{V_1}\right)\,du_{V_1}\\
&= \frac{1}{\gamma}(v_1 - (\beta+\gamma\alpha)x)\bE_{U|x,o}[U_{V_1}] - \frac{1}{\gamma} \bE_{U|x,o}[U_{V_1}^2]\\
&=\frac{1}{\gamma}(v_1 - (\beta+\gamma\alpha)x)\bE_{U|x,o}[U_{V_1}] - \frac{1}{\gamma} (\Var_{U|x,o}(U_{V_1}) + \bE_{U|x,o}[U_{V_1}]^2).
\end{align*}
Then we have all the elements to compute
\[ \cB(x,o)=(\beta+\gamma\alpha)\left(\bE_{U|x,o}[U_{V_1} U_Y] - \bE_{U|x,o}[U_{V_1}]\bE_{U|x,o}[U_Y]\right)=-\frac{\gamma(\beta+\gamma\alpha)}{1+\gamma^2}. \]
Thus
\[ \cC(x,o) + \cB(x,o) = \frac{\alpha - \gamma\beta}{1 + \gamma^2}. \]
Once more, we check that this is the same as computing the association directly. Since $p(y|x,v_1)=N(\tfrac{1}{1+\gamma^2}\left(\gamma v_1 + (\alpha - \beta\gamma)x\right), \tfrac{1}{1+\gamma^2})(y)$, we have $\bE_{Y|x,o}[Y]=\tfrac{1}{1+\gamma^2}\left(\gamma v_1 + (\alpha - \beta\gamma)x\right)$, whence $\cA(x,o)=\nabla_x \bE_{Y|x,o}[Y]=\tfrac{\alpha - \beta\gamma}{1+\gamma^2}$.

\section{Simulated study of statins and atherosclerotic cardiovascular disease}\label{app:ASCVD}
The model is given by
\begin{align*}
    A &= U_A\\
    L &= \theta^L_0 A + \theta^L_1 + e^{\theta^L_2}U_L\\
    F &= \text{sigmoid}\left(\theta^F_0 + \theta^F_1(A + \theta^F_2) + \theta^F_3 A^2 + e^{\theta^F_4}U_F\right)\\
    D &= \text{sigmoid}\left(\theta^D_0 + \theta^D_1 L + \theta^D_2 A + \theta^D_3 A^2 + e^{\theta^D_4} u_D\right)\\
    R &= \text{sigmoid}\left(\theta^R_0 + \theta^R_1 D + \theta^R_2 \log(A) + \theta^R_3 \log^2(A) + \theta^R_4 L + \theta^R_5 \log(A) L + \theta^R_6 F\right)\\
    X &= \text{sigmoid}\Big(\theta^X_1 \mathbbm{1}_{0.05 \le R < 0.075} + \theta^X_2 \mathbbm{1}_{0.075 \le R < 0.2} + \theta^X_3 \mathbbm{1}_{R \ge 0.2} + \theta^X_4 + \theta^X_5 \mathbbm{1}_{D \ge 0.5}\\ &\hspace{1.7cm}+ \theta^X_6 L + \theta^X_7 \mathbbm{1}_{L > \log(160)} + \theta^X_8(A + \theta^X_0) + \theta^X_9(A + \theta^X_0)^2 + e^{\theta^X_{10}}U_X\Big)\\
    M &= L + \theta^M_0 + \theta^M_1 X (\theta^M_2 - L)\mathbbm{1}_{L < \log(130)} + e^{\theta^M_3}U_M\\
    Y &= \text{sigmoid}\Big(\theta^Y_3 + \theta^Y_4 X + \theta^Y_5 M + \theta^Y_6 \sqrt{A + \theta^Y_0} + \theta^Y_7 D + \theta^Y_8 e^{1 + R}\\ &\hspace{1.7cm}+ \theta^Y_9 \left(L + \theta^Y_1 \text{sigmoid}\left(10(L+\theta^Y_2)\right)L^2\right) + e^{\theta^Y_{10}}U_Y\Big)\\
    H &= \text{sigmoid}\left(\theta^H_0 + \theta^H_1 X + \theta^H_2 Y + e^{\theta^H_3}U_H\right),
\end{align*}
with 
\begin{align*}
    \theta^L&=[0.005, \log(100), \log(0.18)]\\
    \theta^F&=[-5.5, 0.05, -20, 0.001, \log(1.1)]\\
    \theta^D&=[-4.23, 0.03, -0.02, 0.0009, \log(1.6)]\\
    \theta^R&=[4.3, 3.5, -2.07, 0.05, 4.09, -1.04, 0.01]\\
    \theta^X&=[-30, 0.273, 1.592, 2.461, -3.471, 1.39, 0.112, 0.973, -0.046, 0.003, \log(1.7)]\\
    \theta^M&=[0.1, -3.5, 5, 0]\\
    \theta^Y&=[-39, 1.4, -\log(110), -6.25, -0.75, -0.1, 0.45, 1.75, 0.29, 0.1, \log(0.9)]\\
    \theta^H&=[-1.7, 0.8, 1.5, \log(0.5)].
\end{align*}
We take $p(U_A)$ to be a trapezoidal distribution with bottom base from 40 to 75 and top base from 40 to 60. All other distributions of exogenous random variables are taken to be standard Gaussians, that is $U_L, U_F, U_D, U_X, U_M, U_Y, U_H\sim \cN(0, 1)$.

\end{document}